\newcommand{\lib}{\textsc{DP-GenG}\ }
\newcommand{\libn}{\textsc{DP-GenG}}
\title{\lib: Differentially Private Dataset Distillation \\ Guided by DP-Generated Data}
\author {
  Shuo Shi\textsuperscript{\rm 1},
  Jinghuai Zhang\textsuperscript{\rm 2},
  Shijie Jiang\textsuperscript{\rm 3},
  Chunyi Zhou\textsuperscript{\rm 1},\\
  Yuyuan Li\textsuperscript{\rm 4},
  Mengying Zhu\textsuperscript{\rm 1},
  Yangyang Wu\textsuperscript{\rm 1},
  Tianyu Du\textsuperscript{\rm 1}\thanks{Corresponding Author}
}
\providecommand{\abs}[1]{\left\lvert#1\right\rvert}
\providecommand{\norm}[1]{\left\lVert#1\right\rVert}
\providecommand{\cF}{\mathcal{F}}
\providecommand{\cM}{\mathcal{M}}
\providecommand{\cN}{\mathcal{N}}
\providecommand{\cO}{\mathcal{O}}
\providecommand{\cS}{\mathcal{S}}
\providecommand{\cT}{\mathcal{T}}
\newenvironment{talign*}
{\csname align*\endcsname}
{\endalign}
\definecolor{coral}{RGB}{255,127,80}
\definecolor{darkgreen}{RGB}{0,100,0}
\definecolor{darkyellow}{RGB}{204,153,0}
\definecolor{salmon}{RGB}{250,128,114}
\newcommand{\thmref}[1]{Theorem~\ref*{#1}}
\newcommand{\lemref}[1]{Lemma~\ref*{#1}}
\newcommand{\figref}[1]{Figure~\ref*{#1}}
\newcommand{\tabref}[1]{Table~\ref*{#1}}
\newcommand{\secref}[1]{Section~\ref*{#1}}
\newcommand{\appref}[1]{Appendix~\ref*{#1}}
\newcommand{\algref}[1]{Alg.~\ref*{#1}}
\newtheoremstyle{custom}
{1pt} %
{1pt} %
{\itshape} %
{} %
{\bfseries} %
{} %
{ } %
{\thmname{#1} \thmnumber{#2} \thmnote{(#3)} . } %
\theoremstyle{custom}
\newtheorem{innerdefinition}{Definition}
\newtheorem{innerproposition}{Proposition}
\newtheorem{innerassumption}{Assumption}
\newtheorem{innerremark}{Remark}
\newtheorem{innertheorem}{Theorem}
\newtheorem{innerhypothesis}{Hypothesis}
\newtheorem{innerconjecture}{Conjecture}
\newtheorem{innerlemma}{Lemma}
\newtheorem{innercorollary}{Corollary}
\newtheorem{innerexample}{Example}
\newtheorem{innernotation}{Notation}
\newtheorem{innerclaim}{Claim}
\newtheorem{innerproblem}{Problem}
\newtheorem{innerobservation}{Observation}
\newmdenv[
  backgroundcolor=white!10,
  linecolor=black!100,
  linewidth=0.8pt,
  skipabove=2pt,
  skipbelow=2pt,
  innertopmargin=5pt,
  innerbottommargin=5pt,
  innerleftmargin=5pt,
  innerrightmargin=5pt,
]{definitionframe}
\newmdenv[
  backgroundcolor=white!10,
  linecolor=black!100,
  linewidth=0.8pt,
  skipabove=2pt,
  skipbelow=2pt,
  innertopmargin=5pt,
  innerbottommargin=5pt,
  innerleftmargin=5pt,
  innerrightmargin=5pt,
]{propositionframe}
\newmdenv[
  backgroundcolor=green!10,
  linecolor=green!100,
  linewidth=0.8pt,
  skipabove=2pt,
  skipbelow=2pt,
  innertopmargin=10pt,
  innerbottommargin=5pt,
  innerleftmargin=5pt,
  innerrightmargin=5pt,
]{assumptionframe}
\newmdenv[
  backgroundcolor=white!10,
  linecolor=black!100,
  linewidth=0.8pt,
  skipabove=2pt,
  skipbelow=2pt,
  innertopmargin=5pt,
  innerbottommargin=5pt,
  innerleftmargin=5pt,
  innerrightmargin=5pt,
]{remarkframe}
\newmdenv[
  backgroundcolor=white!10,
  linecolor=black!100,
  linewidth=0.8pt,
  skipabove=2pt,
  skipbelow=2pt,
  innertopmargin=5pt,
  innerbottommargin=5pt,
  innerleftmargin=5pt,
  innerrightmargin=5pt,
]{theoremframe}
\newmdenv[
  backgroundcolor=purple!10,
  linecolor=purple!100,
  linewidth=0.8pt,
  skipabove=2pt,
  skipbelow=2pt,
  innertopmargin=10pt,
  innerbottommargin=5pt,
  innerleftmargin=5pt,
  innerrightmargin=5pt,
]{hypothesisframe}
\newmdenv[
  backgroundcolor=orange!10,
  linecolor=orange!100,
  linewidth=0.8pt,
  skipabove=2pt,
  skipbelow=2pt,
  innertopmargin=10pt,
  innerbottommargin=5pt,
  innerleftmargin=5pt,
  innerrightmargin=5pt,
]{conjectureframe}
\newmdenv[
  backgroundcolor=white!10,
  linecolor=black!100,
  linewidth=0.8pt,
  skipabove=2pt
  skipbelow=2pt,
  innertopmargin=5pt,
  innerbottommargin=5pt,
  innerleftmargin=5pt,
  innerrightmargin=5pt,
]{lemmaframe}
\newmdenv[
  backgroundcolor=magenta!10,
  linecolor=magenta!100,
  linewidth=0.8pt,
  skipabove=2pt,
  skipbelow=2pt,
  innertopmargin=10pt,
  innerbottommargin=5pt,
  innerleftmargin=5pt,
  innerrightmargin=5pt,
]{corollaryframe}
\newmdenv[
  backgroundcolor=lime!10,
  linecolor=lime!100,
  linewidth=0.8pt,
  skipabove=2pt,
  skipbelow=2pt,
  innertopmargin=10pt,
  innerbottommargin=5pt,
  innerleftmargin=5pt,
  innerrightmargin=5pt,
]{exampleframe}
\newmdenv[
  backgroundcolor=pink!10,
  linecolor=pink!100,
  linewidth=0.8pt,
  skipabove=2pt,
  skipbelow=2pt,
  innertopmargin=10pt,
  innerbottommargin=5pt,
  innerleftmargin=5pt,
  innerrightmargin=5pt,
]{notationframe}
\newmdenv[
  backgroundcolor=violet!10,
  linecolor=violet!100,
  linewidth=0.8pt,
  skipabove=2pt,
  skipbelow=2pt,
  innertopmargin=10pt,
  innerbottommargin=5pt,
  innerleftmargin=5pt,
  innerrightmargin=5pt,
]{claimframe}
\newmdenv[
  backgroundcolor=salmon!10,
  linecolor=salmon!100,
  linewidth=0.8pt,
  skipabove=2pt,
  skipbelow=2pt,
  innertopmargin=10pt,
  innerbottommargin=5pt,
  innerleftmargin=5pt,
  innerrightmargin=5pt,
]{problemframe}
\newmdenv[
  backgroundcolor=lavender!10,
  linecolor=lavender!100,
  linewidth=0.8pt,
  skipabove=2pt,
  skipbelow=2pt,
  innertopmargin=10pt,
  innerbottommargin=5pt,
  innerleftmargin=5pt,
  innerrightmargin=5pt,
]{observationframe}
\newenvironment{definition}
{\begin{definitionframe}\begin{innerdefinition}}
      {\end{innerdefinition}\end{definitionframe}}
\newenvironment{theorem}
{\begin{theoremframe}\begin{innertheorem}}
      {\end{innertheorem}\end{theoremframe}}
\newenvironment{lemma}
{\begin{lemmaframe}\begin{innerlemma}}
      {\end{innerlemma}\end{lemmaframe}}
\newcommand{\IPC}{\texttt{IPC}\xspace}
\begin{document}

\maketitle

\begin{abstract}
  Dataset distillation (DD) compresses large datasets into smaller ones while preserving the performance of models trained on them. Although DD is often assumed to enhance data privacy by aggregating over individual examples, recent studies reveal that standard DD can still leak sensitive information from the original dataset due to the lack of formal privacy guarantees. Existing differentially private (DP)-DD methods attempt to mitigate this risk by injecting noise into the distillation process. However, they often fail to fully leverage the original dataset, resulting in degraded realism and utility. This paper introduces \libn, a novel framework that addresses the key limitations of current DP-DD by leveraging DP-generated data.
  Specifically, \lib initializes the distilled dataset with DP-generated data to enhance realism. Then, generated data refines the DP-feature matching technique to distill the original dataset under a small privacy budget, and trains an expert model to align the distilled examples with their class distribution. Furthermore, we design a privacy budget allocation strategy to determine budget consumption across DP components and provide a theoretical analysis of the overall privacy guarantees.
  Extensive experiments show that \lib significantly outperforms state-of-the-art DP-DD methods in terms of both dataset utility and robustness against membership inference attacks, establishing a new paradigm for privacy-preserving dataset distillation.
  Our code is available at \url{https://github.com/shuoshiss/DP-GENG}.
\end{abstract}

\section{Introduction}
\label{sec:intro}

\begin{figure}[!h]
  \centering
  \includegraphics[width=0.5\textwidth]{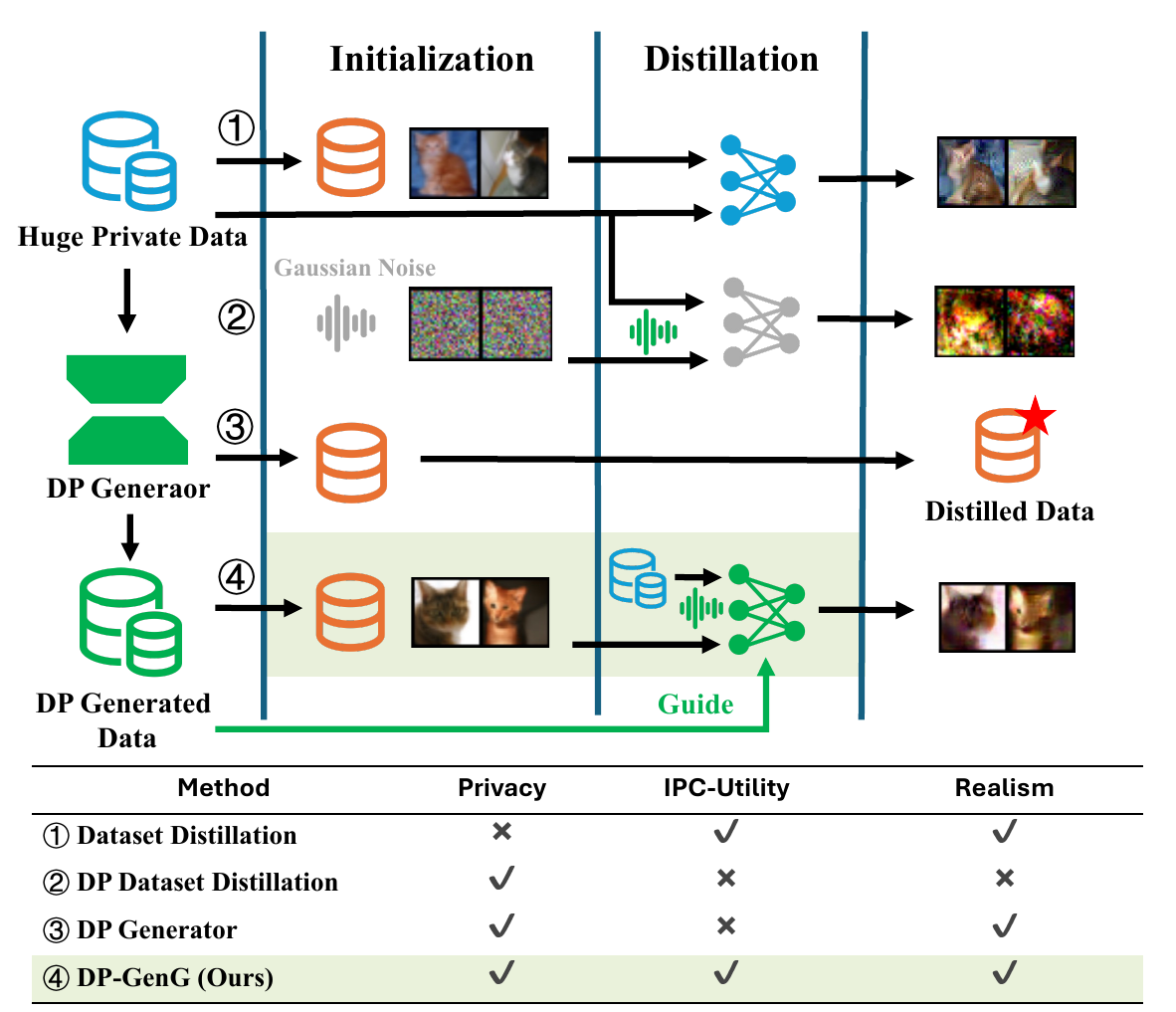}
  \caption{We compare different dataset distillation methods in terms of the privacy, utility and realism of the resulting distilled datasets.}
  \label{fig:dp_distillation_concept}
\end{figure}

Dataset distillation (DD)~\citep{wang2018dataset} aims to compress large datasets into smaller ones while preserving the utility of models trained on them. This technique offers compelling advantages: it enhances training efficiency, reduces storage demands~\citep{yu2023dataset}, and can improve data privacy when working with sensitive data~\citep{dong2022privacy, tong2025dapt, tong2024mmdfnd}. Recent advances~\citep{guo2024lossless,sun2024diversity,wang2025ncfm} have shown that models trained on distilled datasets can achieve accuracy comparable to those trained on full datasets-even when using orders of magnitude fewer training examples.
\looseness=-1

Despite the promising performance of DD, a critical privacy issue remains: although the distilled dataset is assumed to inherently protect data privacy due to its compressed nature, recent studies~\citep{carlini2022no,chen2022private} have shown that standard distillation techniques offer no formal privacy guarantees and can still leak sensitive information. This limitation raises significant concerns, particularly when applied to sensitive datasets—such as medical records—where even marginal leakage of private details could violate privacy regulations (e.g., GDPR~\citep{regulation2016regulation}) or expose individuals to harm~\citep{Karale2021The, lu2025dammfnd, lan2025contextual}.

To address this issue, existing methods (e.g., PSG~\cite{chen2022private} and NDPDC~\cite{zheng2024differentially}) propose differentially private dataset distillation (DP-DD) techniques. The key idea is to inject Gaussian noise into the distillation process, allowing the creation of a distilled dataset from the private dataset while ensuring formal differential privacy (DP) guarantees. While these methods limit an adversary's ability to infer the presence of individual data points, they suffer from key limitations that result in suboptimal performance, as shown in \figref{fig:dp_distillation_concept}. \textbf{Limitation 1:} They fail to preserve the realism of each distilled example due to the lack of direct access to natural data. Realism quantifies how visually and semantically consistent an example is with its corresponding label. A low realism score - often stemming from poor initialization or noisy optimization signals - can severely compromise the utility of the distilled dataset~\citep{sun2024diversity,shao2024elucidating}. \textbf{Limitation 2:} Given a limited privacy budget, these methods must add large amounts of noise, which further degrades both the quality and utility of the resulting dataset.

Previous methods primarily enforce DP by injecting noise directly into the distillation process, which limits their ability to fully exploit the information in the private dataset. Motivated by recent advances in DP-synthetic data generation, which produces synthetic data that closely resemble the private data distribution while ensuring formal DP guarantees, we propose \lib (\underline{D}ifferentially \underline{P}rivate \underline{Gen}erator-\underline{G}uided Distillation). \lib is a novel DP-DD pipeline that bridges DP-generated data with DD to address key limitations in existing DP-DD frameworks.
Specifically, this work aims to address the following research questions: \textbf{(RQ1)} \textit{How can we leverage DP-generated data to guide the distillation process and address key limitations?} \textbf{(RQ2)} \textit{How can we increase the utility of distilled datasets under limited privacy budgets?}
\looseness=-1

\lib begins by using DP data generators to produce a large volume of synthetic data that closely resembles the original private dataset. Then, it selects a representative subset of DP-generated data to initialize the distillation process, which substantially promotes the realism of distilled examples (\textbf{solution to L1}). In the subsequent phase, \lib introduces a novel DP feature-matching technique, which aligns the small distilled dataset with the original dataset in the feature space. Unlike previous methods~\citep{zhao2023idm,zhang2023accelerating,zhang2024dance} that train multiple feature extractors directly on the private dataset for meaningful guidance, \lib constructs feature extractors using the DP-generated data. Due to the post-processing property of DP~\citep{dwork2014algorithmic}, this design incurs no additional privacy cost, thereby mitigating the impact of budget limitations \textbf{(solution to L2)}. Moreover, we identify a critical issue: directly applying DP to DD can degrade performance due to the noises introduced in the matching process, which may cause the distilled examples to drift away from their intended class representation. To address this, \lib introduces a novel DP expert model that acts as a calibrator to dynamically align each distilled example with samples from the same class during distillation.

To further ensure high-quality distillation within a limited privacy budget, \lib designs a tailored privacy budget allocation algorithm that determines budget consumption across different DP components (e.g., synthetic data generation, feature matching, expert guidance) and provides a theoretical analysis of the overall privacy guarantees ensured by \libn. Extensive experiments demonstrate that \lib significantly outperforms existing DP-DD methods in both utility and privacy trade-offs. Our contributions can be summarized as follows:
\begin{itemize}
    \item We propose \libn, a novel framework that bridges DP-generated data with differentially private dataset distillation. \lib identifies key limitations in existing DP-DD methods and leverages tailored mechanisms, guided by DP-generated data, to address them.
    \item We design a privacy budget allocation strategy across DP components and provide a theoretical analysis of the overall privacy guarantees ensured by \libn.
    \item Our evaluations show \lib not only outperforms prior DPDD methods by 11.6\% but is also highly robust to membership inference attacks.
\end{itemize}

\section{Related Work}
\label{sec:related_work}
\noindent
\textbf{Dataset Distillation (DD)}.
DD aims to compress large datasets into smaller synthetic ones while preserving their utility for model training. Early works \citep{wang2018dataset,zhao2020dataset,zhao2023dataset} pioneer gradient matching and distribution matching approaches. Recent advances have further improved DD through more advanced techniques, such as trajectory matching \citep{cazenavette2022dataset,guo2024lossless}, and parameterization\citep{liu2022dataset,yuan2024color}.

While DD is often assumed to protect data privacy~\citep{li2020soft,dong2022privacy}, recent studies~\citep{li2024data,zhao2025does} show that standard DD methods remain vulnerable to membership inference attacks~\citep{carlini2022membership}. To address this, many works have proposed integrating DP into DD to develop DP-DD methods with formal privacy guarantees, including PSG \citep{chen2022private} and NDPDC \citep{zheng2024differentially}. However, these methods simply inject noise into the distillation process and fail to fully exploit the information in the private dataset, leading to suboptimal results.

\paragraph{Differential Privacy (DP).}
DP \citep{dwork2006calibrating} provides a theoretical framework for privacy protection by ensuring that the presence or absence of any individual record in a dataset has minimal impact on an algorithm’s output (See basic definition of DP in Appendix~\ref{definition}). This is achieved by injecting calibrated noise, which prevents adversaries from inferring sensitive information about individual data points while maintaining statistical utility \citep{dwork2006calibrating,patwa2023dp,sun2024netdpsyn,farayola2024data}. \citet{shokri2017membership} show that applying DP to a learning task can reduce the success rate of privacy attacks. Besides,
\citet{jayaraman2019evaluating} evaluate the effectiveness of $(\epsilon, \sigma)$-DP and its variants in neural networks by using membership inference attacks~\citep{shokri2017membership}.

Prior to our work, DP-DD methods are based on Rényi Differential Privacy (RDP)~\citep{mironov2017renyi}. In this work, we adopt $f$-DP~\citep{dong2022gaussian} to achieve improved performance, which offers lossless privacy accounting through a hypothesis testing framework. More specifically, consider testing $H_0$: data $\sim P$ vs. $H_1$: data $\sim Q$ with a rejection rule $\phi \in [0, 1]$. The type I error is $\alpha_\phi = \mathbb{E}_P[\phi]$, and the type II error is $\beta\phi = 1 - \mathbb{E}_Q[\phi]$. The trade-off function $T(P, Q)(\alpha)$ in $f$-DP gives the minimal type II error for type I error at level $\alpha$: $T(P, Q)(\alpha) = \inf_{\phi} \{ \beta_\phi : \alpha_\phi \leq \alpha \}$.
Specifically, this work utilizes \(\mu\)-Gaussian Differential Privacy (\(\mu\)-GDP), an instance of f-DP where the trade-off function is given by \(f=G_{\mu}\), with $G_\mu(x) = \Phi(\Phi^{-1}(1-x) - \mu)$ and $\Phi$ denoting the cumulative distribution function of the standard normal distribution (see \appref{definition} for details).

\paragraph{DP-synthetic Data Generation.}
The practical demand for privacy-preserving synthetic data drives the development of DP-synthetic data generation methods. They aim to generate a dataset (called DP-generated data) that retains the properties of the private dataset while protecting individual privacy~\citep{jia2025prada}. For example, \citet{ghalebikesabi2023differentially} fine-tunes a pre-trained generator on the private dataset with DP guarantees, then uses the fine-tuned generator to produce DP-generated data. In contrast, \citet{lin2023differentially} directly uses a pre-trained generator to create DP-generated data, injecting noise during the dataset refinement process.

\section{Methodology}
\label{sec:method}
In this part, we present our framework, \lib, which integrates DP-generated data to improve the performance of DP-DD. Following~\citep{chen2022private,zheng2024differentially}, we focus on DP-DD in the image domain.
\secref{method:data} describes the techniques used to generate synthetic datasets under DP guarantees. \secref{method:DPFM} explains how DP-generated data is used to guide each component of the DD process. \secref{method:DPEG} introduces an expert model designed to mitigate issues introduced by DP noise.
Finally, \secref{method:privacy} details the budget allocation and analyzes the overall privacy guarantees ensured by \lib.

\begin{figure*}[t]
  \centering
  \includegraphics[width=0.8\textwidth]{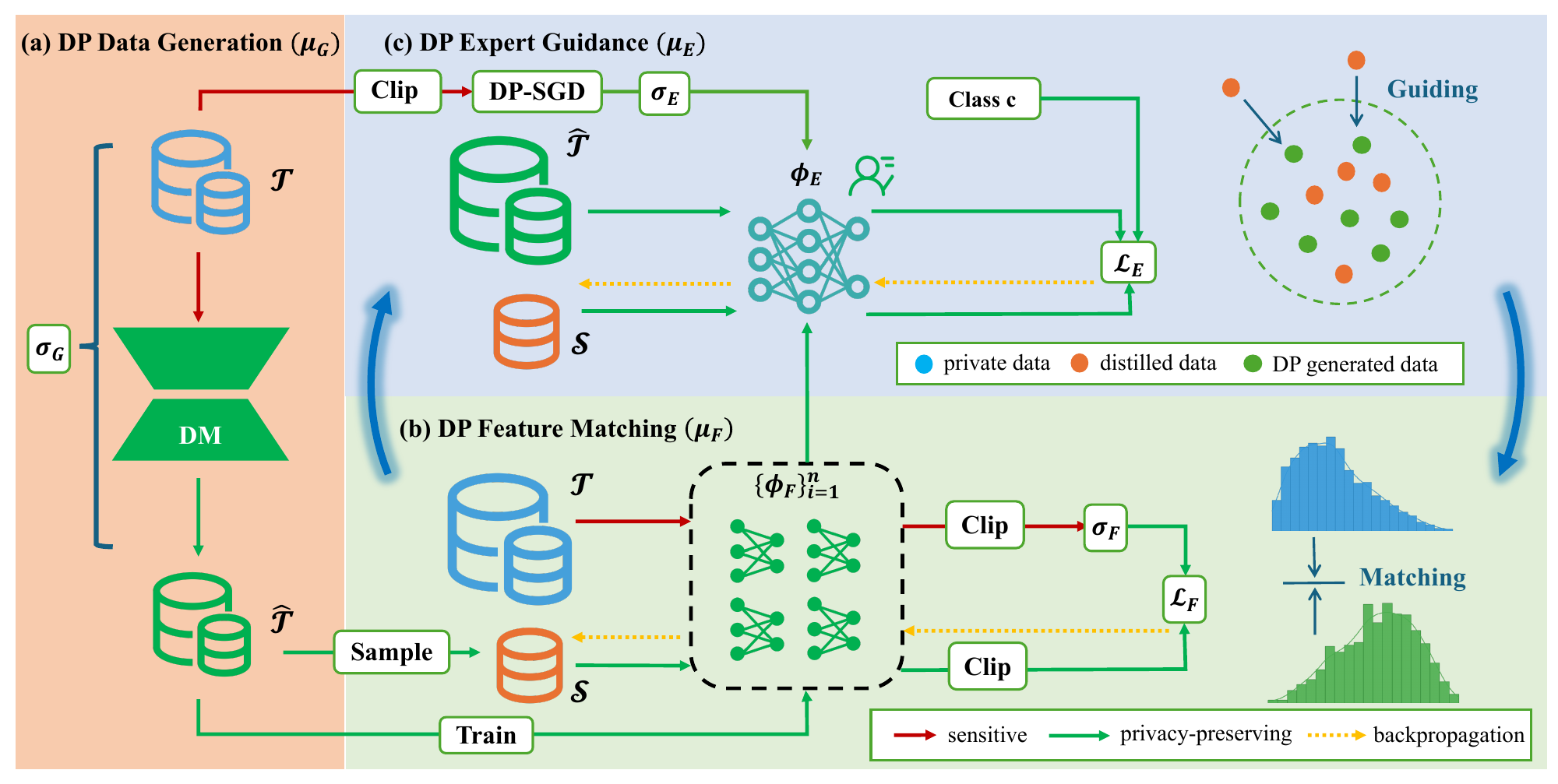}
  \caption{The overall framework of \lib. It fully leverages DP-generated data throughout the distillation process to enhance the performance of the distilled dataset. The blue, green and orange datasets represent the original private dataset, its DP-generated version and its distilled dataset (with DP guarantees), respectively.}
  \label{fig:framework}
\end{figure*}

\subsection{DP Data Generation}
\label{method:data}
Inspired by recent advances in DP image synthesis, we aim to generate a synthetic dataset $\hat{\mathcal{T}}$ that closely resembles the private dataset $\mathcal{T}$ while providing formal privacy guarantees.
Existing DP image synthesis approaches differ in how they ensure privacy and can be categorized into input-level~\citep{harder2021dp}, model-level~\citep{ghalebikesabi2023differentially,liu2023differentially} and output-level~\citep{lin2023differentially} methods, depending on the stage at which DP noise is applied. Among them, the most widely adopted methods involve pre-training generators on large public datasets and subsequently fine-tuning them on the private dataset $\mathcal{T}$ using DP mechanisms (e.g., PrivImage~\citep{li2024privimage}), or injecting noise directly during the synthesis and refinement phases (e.g., PE~\cite{lin2023differentially}). The two approaches are the primary focus of this paper.
Notably, these approaches rely on the Gaussian mechanism, where Gaussian noise $\xi$ is introduced either during the fine-tuning phase of DP data generators~ \citep{li2024privimage} or in the voting results of evolutionary algorithms~\citep{lin2023differentially}. For such methods, adding Gaussian noise with a standard deviation $\sigma_{\text{G}} = 1/\mu_{\text{G}}$ ensures a privacy guarantee of $\mu_{\text{G}}$-GDP, as formalized below:
\begin{lemma}[Gaussian Mechanism to GDP~\citep{dong2022gaussian}]
  Define the Gaussian mechanism that operates on a statistic $\theta$ as $\cM(D)=\theta(D)+\xi$, where $\xi\sim\cN(0, \mathrm{sens}(\theta)^2/\mu^2)$. Then, $\cM$ is $\mu$-GDP.
\end{lemma}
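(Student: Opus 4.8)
The plan is to verify the definition of $\mu$-GDP directly: one must show that for every pair of neighboring datasets $D,D'$ the trade-off function $T(\mathcal{M}(D),\mathcal{M}(D'))$ dominates $G_\mu$ pointwise on $[0,1]$. Fix such a pair and write $v=\theta(D)$, $v'=\theta(D')$, and $\sigma=\mathrm{sens}(\theta)/\mu$, so that $\mathcal{M}(D)\sim\mathcal{N}(v,\sigma^2 I)$ and $\mathcal{M}(D')\sim\mathcal{N}(v',\sigma^2 I)$ are two isotropic Gaussians with a common covariance. The first and central step is to identify the trade-off function between these two multivariate Gaussians.

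The key sublemma is that $T\big(\mathcal{N}(v,\sigma^2 I),\mathcal{N}(v',\sigma^2 I)\big)=G_{\|v-v'\|_2/\sigma}$. To prove it I would invoke the Neyman--Pearson lemma: the most powerful level-$\alpha$ test rejects $H_0$ when the likelihood ratio exceeds a threshold, and for two Gaussians with equal covariance the log-likelihood ratio is an affine function of $\langle x,\,v'-v\rangle$. Hence the optimal test depends on the data only through the scalar statistic $\langle x,u\rangle$ with $u=(v'-v)/\|v'-v\|_2$, which is $\mathcal{N}(\langle v,u\rangle,\sigma^2)$ under $H_0$ and $\mathcal{N}(\langle v',u\rangle,\sigma^2)$ under $H_1$; the two means differ by exactly $\|v-v'\|_2$. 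After standardizing, this is the scalar testing problem $\mathcal{N}(0,1)$ vs.\ $\mathcal{N}(\|v-v'\|_2/\sigma,1)$, whose optimal level-$\alpha$ test rejects for large values and yields type II error $\Phi\big(\Phi^{-1}(1-\alpha)-\|v-v'\|_2/\sigma\big)=G_{\|v-v'\|_2/\sigma}(\alpha)$ by a one-line computation with $\Phi$. One has to run the Neyman--Pearson argument uniformly over $\alpha$ so that the entire trade-off curve — not merely one point — is matched, and check that the family of likelihood-ratio tests realizes all achievable $(\alpha,\beta)$ pairs.

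The proof then finishes with two elementary observations. First, the definition of $\ell_2$-sensitivity gives $\|v-v'\|_2\le\mathrm{sens}(\theta)$, hence $\|v-v'\|_2/\sigma\le\mathrm{sens}(\theta)/\sigma=\mu$. Second, for each fixed $\alpha$ the map $\mu\mapsto G_\mu(\alpha)=\Phi(\Phi^{-1}(1-\alpha)-\mu)$ is nonincreasing, by monotonicity of $\Phi$. Combining, $T(\mathcal{M}(D),\mathcal{M}(D'))=G_{\|v-v'\|_2/\sigma}\ge G_\mu$ pointwise. Since this holds for every neighboring pair (and neighboring is symmetric, so both orderings of the hypotheses are covered), $\mathcal{M}$ is $\mu$-GDP.

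The main obstacle is the first step: establishing that the multivariate Gaussian hypothesis test has exactly the same trade-off function as a one-dimensional Gaussian test, and doing so for the whole curve rather than pointwise. Everything after the reduction to scalars — the sensitivity bound and the monotonicity of $G_\mu$ — is routine.
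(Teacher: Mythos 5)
Your argument is correct and is essentially the standard proof of this result from \citet{dong2022gaussian}, which the paper cites without reproducing: a Neyman--Pearson reduction of the isotropic multivariate Gaussian test to the scalar test $\mathcal{N}(0,1)$ vs.\ $\mathcal{N}(\|v-v'\|_2/\sigma,1)$, followed by the sensitivity bound and the monotonicity of $\mu\mapsto G_\mu$. No gaps; the one subtlety you flag (matching the whole trade-off curve, not a single point) is handled correctly since the Gaussian likelihood ratio is continuous and the threshold tests sweep out all optimal $(\alpha,\beta)$ pairs without randomization.
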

DP image synthesis approaches can produce a large number of synthetic samples $\hat{\mathcal{T}}$ that capture diverse characteristics of the original private dataset. Moreover, due to the post-processing property of DP~\citep{dwork2014algorithmic}, the DP-generated data $\hat{\mathcal{T}}$ will inherit the privacy guarantees of the synthesis process. In other words, $\hat{\mathcal{T}}$ can be freely utilized in downstream computations (distillation process), without incurring any additional privacy cost.

\begin{theorem}[Post-processing~\citep{dwork2014algorithmic}]
  \label{thm:post_processing}
  If \(\cM\) satisfies \((\epsilon, \delta)\)-DP, \(\cF\circ\cM\) will satisfy \((\epsilon, \delta)\)-DP for any private data independent function \(\cF\) with \(\circ\) denoting the composition operator.
\end{theorem}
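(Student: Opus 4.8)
The plan is to reduce the post-processing statement to the defining inequality of $(\epsilon,\delta)$-DP via a preimage argument: first handle a deterministic $\cF$, then extend to a randomized $\cF$ by conditioning on its internal randomness.

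First I would fix an arbitrary pair of neighbouring datasets $D,D'$ and an arbitrary measurable event $S$ in the output space of $\cF\circ\cM$. When $\cF$ is deterministic, the event $\{\cF(\cM(D))\in S\}$ coincides with $\{\cM(D)\in T\}$ for the preimage $T=\cF^{-1}(S)=\{y:\cF(y)\in S\}$, which is measurable because $\cF$ is. Applying the $(\epsilon,\delta)$-DP guarantee of $\cM$ to the event $T$ then gives
\begin{align*}
\prob{\cF(\cM(D))\in S} &= \prob{\cM(D)\in T} \\
&\le e^{\epsilon}\prob{\cM(D')\in T}+\delta \\
&= e^{\epsilon}\prob{\cF(\cM(D'))\in S}+\delta ,
\end{align*}
and since $S$, $D$, $D'$ were arbitrary, $\cF\circ\cM$ is $(\epsilon,\delta)$-DP.

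Second, for a randomized $\cF$ ---which is the case actually relevant here, since the downstream distillation pipeline is stochastic--- I would invoke the hypothesis that $\cF$'s internal randomness $r$ is independent of the private data. Conditioning on $r$ turns $\cF_r:=\cF(\cdot\,;r)$ into a deterministic post-processing map, so the display above holds verbatim with $\cF$ replaced by $\cF_r$ for every fixed $r$. Taking the expectation over $r$ and swapping it with the probability over $\cM$'s randomness (legitimate by independence and nonnegativity of the integrand, i.e.\ Tonelli) recovers the same bound for $\cF$ itself.

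The only real difficulty is measure-theoretic bookkeeping rather than anything conceptual: ensuring the preimage $T$ is measurable so that the DP guarantee of $\cM$ applies to it, and justifying the interchange of the $r$-integral with the probability over $\cM$. Both are routine once $\cF$ is taken to be a measurable Markov kernel whose randomness is independent of $\cM$, so I would record these as standing regularity assumptions and leave the argument at the two short steps above. The identical proof, with ``rejection rule'' in place of ``measurable event'', shows that post-processing likewise preserves the $f$-DP / $\mu$-GDP guarantees invoked elsewhere in the paper.
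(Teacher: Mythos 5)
Your proof is correct. The paper does not prove this statement itself --- it is imported verbatim from Dwork and Roth with a citation --- so there is no in-paper argument to compare against; your two-step proof (deterministic $\cF$ via the measurable preimage $T=\cF^{-1}(S)$ and the DP inequality applied to $T$, then randomized $\cF$ by conditioning on its data-independent internal randomness and averaging, which is legitimate since the bound is affine in the probabilities and $\delta$ is constant) is exactly the standard textbook argument, and your closing observation that the same reasoning preserves $f$-DP and $\mu$-GDP is also accurate and is what the paper implicitly relies on when reusing DP-generated data downstream.
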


Motivated by this, \textbf{the key idea is to construct a large synthetic dataset that can replace the private dataset during certain parts of the distillation process.} In contrast to prior methods that enforce DP solely by injecting noise during the optimization, the use of DP-generated data offers a novel perspective on achieving privacy guarantees, allowing us to fully leverage the knowledge contained in the private dataset without incurring large privacy costs while improving the realism of the distilled dataset.

\subsection{DP Feature Matching}
\label{method:DPFM}
Feature matching (FM) is an effective method for DD, where the distilled dataset is optimized to match the feature distributions of the full dataset. We adopt FM as the core distillation algorithm within DP-DD due to its generalizability. Moreover, we incorporate DP-generated data in both the initialization and matching phases to address identified limitations.

\paragraph{Initialization with DP-generated Data.}
A recent study shows that DD methods achieve optimal utility when distilled examples are initialized with samples from the private dataset~\citep{shao2024elucidating, zhao2023benchmark}, which is attributed to the enhanced realism provided by such initialization. Unlike these approaches, as well as previous DP-DD methods that rely on Gaussian noise initialization~\citep{chen2022private,zheng2024differentially}, we propose using DP-generated data to initialize the distilled dataset. Given that the size of the distilled dataset is controlled by the number of images per class (\IPC), we construct the initialized distilled set $\mathcal{S}_{\text{init}}$ by applying a sampling strategy (e.g., selection by performing k-means clustering on features) over the DP-generated data $\hat{\mathcal{T}}$: $\cS_\text{init}=\text{Sample}(\hat{\mathcal{T}}, \IPC)$. Moreover, we employ a parameterization technique~\citep{kim2022dataset}, which embeds multiple DP synthetic images within a single image, to maximize the utilization of the private dataset. Notably, the use of DP-generated data for initialization directly addresses limitation \textbf{(L1)} by preserving the realism of the distilled dataset, which in turn leads to enhanced utility-privacy tradeoff.

\paragraph{Feature Extractor from DP-generated Data.}
The feature extractor in FM captures image representations to facilitate the alignment of feature distributions.
The success of FM relies on multiple trained feature extractors~\citep{zhao2023idm,zhang2024dance, zhao2024balf}.
Existing methods either use private dataset directly to train the feature extractors, which violates DP guarantees, or rely on randomly initialized feature extractors, which results in poor utility. A feasible alternative is to obtain DP-guaranteed feature extractors using DP-SGD~\citep{abadi2016deep}; however, this approach is often impractical under limited privacy budgets.
Specifically, achieving robust feature extraction requires training multiple feature extractors $\{\phi_{\text{F}}^{n}\}_{n=1}^{N}$ in parallel over multiple iterations. Applying DP-SGD in this setting introduces substantial noise due to the limited privacy budgets allocated to each training run, which ultimately degrades the quality of the distilled dataset. In this work, we propose training feature extractors using DP-generated data, which retains the ability to capture features of the private dataset while effectively addressing the budget limitation \textbf{(L2)}.  According to \thmref{thm:post_processing}, models trained on DP-generated data inherit its privacy guarantees without incurring additional privacy cost.

\paragraph{Matching via Injected DP Noise.}
Given trained feature extractors, the matching phase aligns the feature distribution of the distilled dataset with that of a full dataset. This full dataset can be either DP-generated data or the original private dataset. The only difference is that, when using the original private dataset, DP must be enforced by injecting noise during the matching process.
Empirically, we find that using the original private dataset yields better performance under the same privacy budget, despite the added noise.
We attribute this to the fact that the distilled dataset is initialized with DP-generated data, which already captures prior knowledge from that source. As a result, the original private dataset provides a more informative guidance signal during matching. Specifically, we inject Gaussian noise with standard deviation $\sigma_F$ during the matching process, as shown below:
\begin{equation}
  \label{eq:mean_feature_def}
  \bar{\phi}_{\mathrm{F}}(\mathcal{D}) = \frac{1}{\abs{\mathcal{D}}}\sum_{\boldsymbol{x} \in \mathcal{D}}\textsc{Clip}(\phi_{\mathrm{F}}(\boldsymbol{x}), C),
\end{equation}
\begin{equation}
  \label{eq:feature_matching_loss_redefined}
  \mathcal{L}_{\text{F}} = \left\| \bar{\phi}_{\mathrm{F}}(\mathcal{T}) + \mathcal{N}(0,\sigma_\text{F}^2C^2I) - \bar{\phi}_{\mathrm{F}}(\mathcal{S}) \right\|^2.
\end{equation}
where $\textsc{Clip}(\phi_{\mathrm{F}}(\boldsymbol{x}^{\mathcal{*}}), C)=\min(1, \frac{C}{\norm{\phi_{\mathrm{F}}(\boldsymbol{x}^{\mathcal{*}})}})\phi_{\mathrm{F}}(\boldsymbol{x}^{\mathcal{*}})$ represents a feature clipping mechanism designed to regulate the sensitivity of the L2 norm of the averaged feature vector.

Unlike prior work~\cite{chen2022private,zheng2024differentially} that relies on Rényi Differential Privacy~\citep{mironov2017renyi}, we adopt GDP~\citep{dong2022gaussian}. Due to mini-batch processing,
by applying the \textbf{Subsampling Theorem} of GDP in \appref{appendix:privacy_analysis}, we achieve noise reduction proportional to the sampling probability $p$, while preserving the same privacy guarantees. The resulting privacy parameter is given by: $\mu_F=p\sqrt{T(\mathrm{e}^{1/\sigma_F^2}-1)}$.

\subsection{DP Expert Guidance}
\label{method:DPEG}

After initialization, examples in the distilled dataset largely retain their visual appearance throughout the matching process. However, their feature representations are continuously optimized to capture the characteristics of the full dataset.
Due to the effects of DP noise introduced in DP-DD algorithms, we observe that some distilled examples may undergo significant shifts in their feature representations. As illustrated in \figref{fig:appendix_dist_shift} in \appref{appendix:class_distribution_shift}, these shifts can cause an example to deviate from its original class, degrading the utility of the distilled dataset.
To address this, we introduce an expert model to regularize the matching process and keep features of distilled examples aligned with their class distributions.

\paragraph{Expert Model.}
We employ an expert model to learn class-wise feature representations, which enables the calibration of misaligned distilled examples affected by DP noise.
To train this expert, we first perform standard training on DP-generated data, followed by fine-tuning with DP-SGD on the original private dataset with $\mu_{\text{E}}$ (See \algref{alg:gradient_sanitizing} in \appref{appendix:expert_model_training}).

\paragraph{DP-Generated Data Guided Alignment.}
We aim to align distilled examples with their class representations by using the expert model as a regularizer. For each distilled example, we sample reference data points $\mathcal{R}^{\boldsymbol{y}^{\cS}}$ from its class $\boldsymbol{y}^{\cS}$ and encourage the example to follow the same feature distribution as these references~\citep{ma2024optimal}. To preserve privacy, we use samples from the DP-generated data as references without additional privacy cost. Since the expert model is trained to accurately capture class-wise distinctions, the alignment task becomes equivalent to maximizing the similarity between each distilled example and its associated references in the expert model’s label space. For each distilled example, we aggregate the predicted logits of its references and apply two forms of supervision: (1) KL divergence between the distilled example’s logits and the aggregated soft labels to guide distributional alignment, and (2) hard-label regularization to preserve class identity. We illustrate the loss at each iteration below, where $\mathcal{R}^{\boldsymbol{y}^{\cS}} \in \hat{\mathcal{T}}$ represents references that are randomly sampled from class $\boldsymbol{y}^{\cS}$:
\begin{equation}
  \label{eq:class_centroid_def}
  \boldsymbol{c}_{y} = \frac{1}{|{\mathcal{R}}^{y}|}\sum_{\boldsymbol{x}^{\mathcal{R}} \in {\mathcal{R}}^{y}}\phi_{\mathrm{E}}(\boldsymbol{x}^{\mathcal{R}}).
\end{equation}
\begin{equation}
  \label{eq:expert_guidance_loss_redefined}
  \mathcal{L}_{\text{E}} =\frac{1}{|\cS|}\sum_{(\boldsymbol{x}^{\cS},\boldsymbol{y}^{\cS}) \in \cS} \left[ \mathcal{L}_{\text{CE}}(\phi_\mathrm{E}(\boldsymbol{x}^{\cS}), \boldsymbol{y}^{\cS}) + \mathcal{L}_{\text{KL}}(\phi_{\mathrm{E}}(\boldsymbol{x}^{\mathcal{S}})\Vert \boldsymbol{c}_{\boldsymbol{y}^{\cS}})\right].
\end{equation}
\subsection{Overall Privacy Analysis}
\label{method:privacy}

\lib involves three components that jointly consume the privacy budget: DP data generation $\mu_G$, feature matching $\mu_F$, and expert model training $\mu_E$. In this part, we first analyze the total privacy cost of \libn. We then present a practical strategy for selecting optimal parameters to achieve a strong privacy-utility trade-off.
The algorithm of \lib is shown in \algref{alg:workflow} in \appref{appendix:expert_model_training}.

\paragraph{Privacy Cost Calculation for the Entire Process.}
The privacy parameters \(\mu_\mathrm{G}\), \(\mu_\mathrm{F}\), and \(\mu_\mathrm{E}\), which are controlled by the standard deviations of Gaussian noise, determine the total privacy cost.
Based on the composition property of GDP, we compute the overall privacy parameter \(\mu_\text{total}\) for the entire \lib workflow as below:

\begin{lemma}[GDP Composition~\citep{dong2022gaussian}]~
  \newline
  \label{lem:gdp_composition}
  The n-fold composition of \(\mu_i\)-GDP mechanisms is \(\sqrt{\mu_1^2+\cdots+\mu_n^2}\)-GDP.
\end{lemma}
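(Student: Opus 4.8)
The plan is to prove the statement by induction on $n$, reducing the whole claim to a single fact about trade-off functions: the tensor product of two Gaussian trade-off functions is again Gaussian, namely $G_{\mu_1}\otimes G_{\mu_2}=G_{\sqrt{\mu_1^2+\mu_2^2}}$. First I would record the hypothesis-testing characterization already introduced in the excerpt: a mechanism is $\mu$-GDP exactly when, for every pair of neighboring inputs $D,D'$, the trade-off function $T(\cM(D),\cM(D'))$ dominates $G_\mu$ pointwise, and one checks directly from the definition $G_\mu(\alpha)=\Phi(\Phi^{-1}(1-\alpha)-\mu)$ together with the Neyman--Pearson lemma that $G_\mu=T(\cN(0,1),\cN(\mu,1))$. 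Next I would invoke the general composition theorem for $f$-DP of \citet{dong2022gaussian}: the adaptive composition of an $f$-DP mechanism with a $g$-DP mechanism is $(f\otimes g)$-DP, where $f\otimes g:=T(P_1\times P_2,\,Q_1\times Q_2)$ for any distributions realizing $T(P_i,Q_i)=f,g$, the operation being well defined independently of the chosen representatives. Applying this $n-1$ times, the $n$-fold composition of $\mu_i$-GDP mechanisms is $(G_{\mu_1}\otimes\cdots\otimes G_{\mu_n})$-DP.

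It then suffices to compute this tensor product. By associativity of $\otimes$ (inherited from associativity of products of distributions) and induction, it is enough to treat two factors. Taking the canonical Gaussian representatives,
\begin{equation*}
\begin{aligned}
G_{\mu_1}\otimes G_{\mu_2}
&=T\big(\cN(0,1)\times\cN(0,1),\ \cN(\mu_1,1)\times\cN(\mu_2,1)\big)\\
&=T\big(\cN(\mathbf{0},I_2),\ \cN(\mathbf{m},I_2)\big),
\end{aligned}
\end{equation*}
with $\mathbf{m}=(\mu_1,\mu_2)$. By Neyman--Pearson the likelihood ratio between these two bivariate Gaussians is a monotone function of the linear statistic $\langle\mathbf{m},x\rangle$, which is therefore a sufficient statistic for the testing problem; under $H_0$ it is distributed as $\cN(0,\|\mathbf{m}\|^2)$ and under $H_1$ as $\cN(\|\mathbf{m}\|^2,\|\mathbf{m}\|^2)$. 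Rescaling by $\|\mathbf{m}\|$ turns this into testing $\cN(0,1)$ against $\cN(\|\mathbf{m}\|,1)$, so $G_{\mu_1}\otimes G_{\mu_2}=G_{\|\mathbf{m}\|}=G_{\sqrt{\mu_1^2+\mu_2^2}}$. Iterating gives $G_{\mu_1}\otimes\cdots\otimes G_{\mu_n}=G_{\sqrt{\mu_1^2+\cdots+\mu_n^2}}$, and combined with the previous paragraph this yields $\sqrt{\mu_1^2+\cdots+\mu_n^2}$-GDP for the composition.

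The genuinely substantive ingredient is the $f$-DP composition theorem used above: that adaptive composition corresponds \emph{exactly} to the tensor product of trade-off functions, and that $\otimes$ is well defined on trade-off functions rather than only on pairs of distributions. A self-contained proof of this requires the structural characterization of trade-off functions (convex, continuous, non-increasing on $[0,1]$, bounded above by $1-\alpha$) and an argument that the worst-case post-composition pair of output distributions is the product of the per-step worst-case pairs; since this is exactly the result cited for the lemma itself, I would invoke it rather than reprove it. Everything after that point is routine: the sufficiency/rotation argument that collapses the $n$-dimensional Gaussian testing problem to one dimension with the Euclidean norm of the mean vector, the associativity of $\otimes$, and the elementary normalization step. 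If one instead treats the $f$-DP composition theorem as a black box, the only mild care needed is in this collapse to one dimension, which is where the $\ell_2$ combination $\sqrt{\mu_1^2+\cdots+\mu_n^2}$ — rather than, say, an additive one — actually emerges.
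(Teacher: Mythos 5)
The paper does not prove this lemma; it imports it verbatim from \citet{dong2022gaussian}, so there is no in-paper argument to compare against. Your proposal is a correct reconstruction of the derivation in that reference: reduce $n$-fold composition to the tensor product $G_{\mu_1}\otimes\cdots\otimes G_{\mu_n}$ via the general $f$-DP composition theorem, then identify $G_{\mu_1}\otimes G_{\mu_2}$ with the trade-off function of $\cN(\mathbf{0},I_2)$ versus $\cN(\mathbf{m},I_2)$ and collapse it to one dimension through the sufficient statistic $\langle\mathbf{m},x\rangle$, whose null and alternative laws rescale to $\cN(0,1)$ and $\cN(\|\mathbf{m}\|,1)$ — this is exactly where the $\ell_2$ combination comes from, as you note. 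The computation checks out, including the variance $\|\mathbf{m}\|^2$ under both hypotheses and the normalization step. One small clarification: there is no circularity in treating the $f$-DP composition theorem as a black box, since that is a strictly more general statement (Theorem 3.2 in the cited work) than the Gaussian corollary being proved here; your phrasing ``this is exactly the result cited for the lemma itself'' slightly conflates the two, but the logical structure of your argument is the correct one.
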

Next, we need to convert the \(\mu\)-GDP to \((\epsilon, \delta)\)-DP. We can easily connect GDP and DP by \lemref{lem:conversion}. Under the same DP privacy budget, using GDP provides tighter bounds for the subsampled Gaussian mechanism. 
\begin{lemma}[GDP to DP Conversion~\citep{dong2022gaussian}]
  \label{lem:conversion}
  A mechanism is \(\mu\)-GDP if and only if it is \((\epsilon, \delta(\epsilon))\)-DP for all \(\epsilon \ge 0\), where
  \begin{equation*}
    \delta(\epsilon) = \Phi\left(-\frac{\epsilon}{\mu}+\frac{\mu}{2}\right)-\mathrm{e}^\epsilon\Phi\left(-\frac{\epsilon}{\mu}-\frac{\mu}{2}\right)
  \end{equation*}
  where \(\Phi\) denotes the standard normal CDF.
\end{lemma}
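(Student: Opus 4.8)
I would argue entirely within the $f$-DP framework introduced above. Recall the standard hypothesis-testing characterization: a mechanism is $(\epsilon,\delta)$-DP exactly when, for every pair of neighbouring datasets, its trade-off function $T(P,Q)$ dominates the piecewise-linear function $f_{\epsilon,\delta}(\alpha)=\max\{0,\;1-\delta-\mathrm{e}^{\epsilon}\alpha,\;\mathrm{e}^{-\epsilon}(1-\delta-\alpha)\}$; and, by definition, $\mu$-GDP means $T(P,Q)\ge G_{\mu}$ for all neighbours. Thus the biconditional reduces to a purely analytic statement about functions on $[0,1]$: on the one hand $G_{\mu}\ge f_{\epsilon,\delta(\epsilon)}$ for every $\epsilon\ge 0$, and on the other hand $G_{\mu}$ is precisely the pointwise supremum of the family $\{f_{\epsilon,\delta(\epsilon)}\}_{\epsilon\ge 0}$, so that nothing stronger than $\mu$-GDP is forced.

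The core computation is, for fixed $\epsilon\ge 0$, to identify the smallest $\delta$ with $f_{\epsilon,\delta}\le G_{\mu}$ pointwise. Since $G_{\mu}$ is its own inverse as a trade-off function ($G_{\mu}\circ G_{\mu}=\mathrm{id}$) and $f_{\epsilon,\delta}$ is the upper envelope of $0$ together with two mutually reflected lines, a short symmetry argument shows it suffices to impose $G_{\mu}(\alpha)\ge 1-\delta-\mathrm{e}^{\epsilon}\alpha$ for all $\alpha\in[0,1]$, i.e. $\delta\ge\max_{\alpha}h(\alpha)$ with $h(\alpha)=1-\mathrm{e}^{\epsilon}\alpha-G_{\mu}(\alpha)$. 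To maximize $h$ I substitute $u=\Phi^{-1}(1-\alpha)$, so $\alpha=\Phi(-u)$ and $G_{\mu}(\alpha)=\Phi(u-\mu)$; the chain rule with the identity $\varphi(u-\mu)/\varphi(u)=\exp(\mu u-\mu^{2}/2)$, where $\varphi$ is the standard normal density, gives $G_{\mu}'(\alpha)=-\exp(\mu u-\mu^{2}/2)$. Hence $h'(\alpha)=0$ forces $\mu u-\mu^{2}/2=\epsilon$, i.e. the unique interior stationary point $u^{\star}=\epsilon/\mu+\mu/2$ (one checks $h'$ decreases through zero there, so it is the maximum). Substituting $\alpha^{\star}=\Phi(-u^{\star})$ and $G_{\mu}(\alpha^{\star})=\Phi(u^{\star}-\mu)$ and simplifying with $\Phi(x)=1-\Phi(-x)$ yields $\max_{\alpha}h(\alpha)=\Phi(\tfrac{\mu}{2}-\tfrac{\epsilon}{\mu})-\mathrm{e}^{\epsilon}\Phi(-\tfrac{\epsilon}{\mu}-\tfrac{\mu}{2})$, which is the stated $\delta(\epsilon)$.

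With this in hand the two directions are short. If the mechanism is $\mu$-GDP then for each $\epsilon$ we have $T(P,Q)\ge G_{\mu}\ge f_{\epsilon,\delta(\epsilon)}$ by the previous paragraph, so it is $(\epsilon,\delta(\epsilon))$-DP for all $\epsilon\ge 0$. Conversely, if it is $(\epsilon,\delta(\epsilon))$-DP for every $\epsilon$, then $T(P,Q)\ge\sup_{\epsilon\ge 0}f_{\epsilon,\delta(\epsilon)}$, and I would then show the right-hand side equals $G_{\mu}$. Each $f_{\epsilon,\delta(\epsilon)}$ is a supporting line of the convex function $G_{\mu}$, tangent at $\alpha^{\star}(\epsilon)$ and, by symmetry, at $G_{\mu}(\alpha^{\star}(\epsilon))$; since $\epsilon\mapsto\alpha^{\star}(\epsilon)=\Phi(-\epsilon/\mu-\mu/2)$ is continuous and decreasing, these contact points sweep out a pair of intervals whose union is all of $(0,1)$, and a convex function on an interval is the pointwise supremum of its supporting lines. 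Therefore $\sup_{\epsilon}f_{\epsilon,\delta(\epsilon)}=G_{\mu}$ (continuity extends this to the endpoints $\alpha=0,1$), so $T(P,Q)\ge G_{\mu}$ and the mechanism is $\mu$-GDP.

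The main obstacle is the envelope step in the converse direction: one must verify that this particular one-parameter family of supporting lines covers the entire graph of $G_{\mu}$ rather than only a sub-arc, which relies on the symmetry $G_{\mu}\circ G_{\mu}=\mathrm{id}$ together with the monotonicity and range of $\epsilon\mapsto\alpha^{\star}(\epsilon)$. The degenerate cases ($\mu=0$, and the limits $\alpha\to 0^{+},1^{-}$ and $\epsilon\to\infty$) deserve a brief separate check, but the per-$\epsilon$ optimization is a routine one-variable calculus exercise once the substitution $u=\Phi^{-1}(1-\alpha)$ is made.
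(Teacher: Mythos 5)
The paper does not prove this lemma at all --- it is imported verbatim from \citet{dong2022gaussian} (their Corollary 2.13) and used as a black box in the budget-allocation theorem --- so there is no in-paper proof to compare against. Your reconstruction is correct and is essentially the original source's argument: your per-$\epsilon$ optimization $\delta(\epsilon)=\sup_\alpha\bigl(1-\mathrm{e}^{\epsilon}\alpha-G_\mu(\alpha)\bigr)$ is exactly the computation of $1+G_\mu^{*}(-\mathrm{e}^{\epsilon})$ in Dong et al.'s primal--dual characterization of $f$-DP via $(\epsilon,\delta)$-DP, and your converse is their biconjugation step, here made concrete by checking that the contact points $\Phi(-\epsilon/\mu-\mu/2)$ and their reflections $\Phi(\epsilon/\mu-\mu/2)$ sweep out all of $(0,1)$. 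The calculus is right ($G_\mu'(\alpha)=-\mathrm{e}^{\mu u-\mu^{2}/2}$ with $u=\Phi^{-1}(1-\alpha)$, stationary point $u^{\star}=\epsilon/\mu+\mu/2$), and the symmetry reduction to the single line $1-\delta-\mathrm{e}^{\epsilon}\alpha$ is justified by $G_\mu\circ G_\mu=\mathrm{id}$ exactly as you say; the only items you should actually write out rather than defer are the $\mu=0$ degenerate case and the substitution showing the two lines impose the same constraint on $\delta$.
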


\paragraph{Budget Allocation Strategy.}
The overall privacy cost of \lib depends on the amount of noise injected into its components: DP data generation ($\sigma_\text{G}
$), DP feature matching ($\sigma_\text{F}$), and DP expert model training ($\sigma_\text{E}$). The corresponding GDP parameters are $\mu_\text{G}$, $\mu_\text{F}$, and $\mu_{E}$. We strategically allocate the privacy budget by prioritizing the noise levels for DP data generation and expert model training ($\sigma_\text{G}$ and $\sigma_\text{E}$), since their performance can be independently evaluated. Specifically, we employ binary search to determine the appropriate noise multiplier $\sigma_\text{G}$ for DP data generation, aiming to achieve a target FID score. Similarly, we adjust $\sigma_\text{E}$ for training the expert model to reach a desired accuracy (a detailed explanation of the target FID score and accuracy can be found in the \appref{appendix:target_FID_accuracy}). Once the noise multipliers $\sigma_\text{G}$ and $\sigma_\text{E}$ are set based on these utility goals, we compute the required noise level $\sigma_\text{F}$ for DP feature matching using~\lemref{lem:gdp_composition}, ensuring that the overall privacy budget $(\epsilon_{\text{total}}, \delta_{\text{total}})$ is satisfied. In other words, we allocate the privacy budget to each component according to a utility-driven criterion defined by the target FID of the DP-generated data and the target accuracy of the expert model. Our strategy ensures that the total privacy expenditure remains within the specified budget, as detailed in the following theorem, proofed in \appref{appendix:privacy_analysis}.

\begin{theorem}[\lib Privacy Budget Allocation]
  In the \lib, each component involves differential privacy guarantees. Given the total privacy parameters \(\epsilon_{\text{total}}\) and \(\delta_{\text{total}}\), with \(\mu_{\text{G}}\) for generation and \(\mu_{\text{E}}\) for expert guidance stages, we can derive the noise parameter \(\sigma_{\text{F}}\) for feature matching with batch sampling probability \(p=\frac{B}{n}\) over \(T\) iterations as follows:
  \begin{equation*}
    \sigma_{\text{F}} = \sqrt{\ln{\frac{T \cdot p^2}{\mu_{\text{total}}^2 - \mu_{\text{G}}^2 - \mu_{\text{E}}^2} + 1}}^{-1}.
  \end{equation*}
\end{theorem}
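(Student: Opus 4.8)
The plan is to chain together the GDP primitives stated in the excerpt in the natural order: first express each component's contribution as a $\mu$-GDP guarantee, then compose them, then invert the composition to solve for $\sigma_{\text{F}}$. Concretely, I would start by recalling that the DP data generation stage is $\mu_{\text{G}}$-GDP and the expert training stage is $\mu_{\text{E}}$-GDP (both by the Gaussian Mechanism to GDP lemma, with the noise multipliers chosen by the budget-allocation binary search). Next, for the feature-matching stage I would invoke the subsampled-Gaussian accounting already quoted in Section~\ref{method:DPFM}: with per-iteration sampling probability $p = B/n$ over $T$ iterations, that stage is $\mu_{\text{F}}$-GDP with $\mu_{\text{F}} = p\sqrt{T(\mathrm{e}^{1/\sigma_{\text{F}}^2}-1)}$. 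This is the key structural identity the whole theorem rests on.

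Then I would apply GDP Composition (Lemma~\ref{lem:gdp_composition}) to the three stages to get that the full \lib pipeline is $\mu_{\text{total}}$-GDP with $\mu_{\text{total}}^2 = \mu_{\text{G}}^2 + \mu_{\text{F}}^2 + \mu_{\text{E}}^2$. Substituting the subsampling expression for $\mu_{\text{F}}^2 = p^2 T(\mathrm{e}^{1/\sigma_{\text{F}}^2}-1)$ gives $\mu_{\text{total}}^2 - \mu_{\text{G}}^2 - \mu_{\text{E}}^2 = p^2 T (\mathrm{e}^{1/\sigma_{\text{F}}^2} - 1)$. Solving this for $\sigma_{\text{F}}$: divide through by $p^2 T$, add $1$, take $\log$, take the reciprocal of the square root, yielding
\begin{equation*}
  \sigma_{\text{F}} = \left(\ln\!\left(\frac{T p^2}{\mu_{\text{total}}^2 - \mu_{\text{G}}^2 - \mu_{\text{E}}^2} + 1\right)\right)^{-1/2},
\end{equation*}
which matches the claimed formula (reading the paper's $\sqrt{\cdot}^{-1}$ notation as reciprocal square root). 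Finally, I would close the loop to $(\epsilon_{\text{total}}, \delta_{\text{total}})$-DP: since $\mu_{\text{total}}$ is fixed via the target $(\epsilon_{\text{total}},\delta_{\text{total}})$ through the GDP-to-DP conversion (Lemma~\ref{lem:conversion}), the derived $\sigma_{\text{F}}$ is exactly the value for which the overall composition meets the stated budget, and I'd note the implicit feasibility condition $\mu_{\text{total}}^2 > \mu_{\text{G}}^2 + \mu_{\text{E}}^2$ (and that the argument of $\ln$ exceeds $1$, so $\sigma_{\text{F}}$ is real and positive).

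The main obstacle — really the only non-mechanical point — is justifying the subsampled-GDP bound $\mu_{\text{F}} = p\sqrt{T(\mathrm{e}^{1/\sigma_{\text{F}}^2}-1)}$ at the level of rigor the theorem deserves: the exact subsampling theorem for GDP only gives a trade-off function that is not itself of the form $G_\mu$, so one must either cite the asymptotic/central-limit GDP approximation for the $T$-fold composition of $p$-subsampled Gaussians (as in Bu et al. / Dong et al.) or explicitly state that the formula is the standard CLT-based approximation used for composition accounting. I would handle this by pointing to the Subsampling Theorem and composition CLT in \appref{appendix:privacy_analysis} and treating the resulting $\mu_{\text{F}}$ as the effective GDP parameter, so that the rest of the derivation is a pure algebraic inversion. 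I would also double-check the sensitivity bookkeeping feeding into $\sigma_{\text{F}}$: the clipping in \eqref{eq:mean_feature_def} makes each per-example feature contribution have $\ell_2$-norm at most $C$, so the averaged statistic $\bar\phi_{\mathrm{F}}(\mathcal{T})$ has sensitivity $C/|\mathcal{T}|$ under add/remove of one record, and the noise scale $\sigma_{\text{F}} C$ in \eqref{eq:feature_matching_loss_redefined} is calibrated accordingly — this is what makes the per-iteration mechanism $(1/\sigma_{\text{F}})$-GDP before subsampling, matching the $\mathrm{e}^{1/\sigma_{\text{F}}^2}$ term.
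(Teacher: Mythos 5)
Your overall route is the same as the paper's: express each stage as a $\mu$-GDP guarantee, use the subsampled-Gaussian accounting $\mu_{\text{F}} = p\sqrt{T(e^{1/\sigma_{\text{F}}^2}-1)}$ for feature matching (\lemref{lem:subsampling_theorem}), compose via \lemref{lem:gdp_composition} to get $\mu_{\text{total}}^2=\mu_{\text{G}}^2+\mu_{\text{F}}^2+\mu_{\text{E}}^2$, fix $\mu_{\text{total}}$ from $(\epsilon_{\text{total}},\delta_{\text{total}})$ via \lemref{lem:conversion}, and invert algebraically. Your added remarks --- that the subsampled trade-off function is only asymptotically of the form $G_\mu$ so the identity is a CLT-based accounting approximation, the feasibility condition $\mu_{\text{total}}^2>\mu_{\text{G}}^2+\mu_{\text{E}}^2$, and the sensitivity bookkeeping behind the per-iteration $(1/\sigma_{\text{F}})$-GDP claim --- are all points the paper's proof does not spell out, and they make the argument more careful where it matters.

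There is, however, one step in the paper's proof that you omit, and it is load-bearing. In \algref{alg:workflow} the feature-matching loss is computed inside a nested loop: the outer loop runs $T$ times, and the inner loop iterates over the classes, drawing a mini-batch from the class-$j$ portion $\mathcal{T}^j$ of the private data at each inner step. Taken literally, that is $T$ times (number of classes) queries to the private dataset, and sequential composition over all of them would inflate $\mu_{\text{F}}$ by a factor of $\sqrt{\text{\#classes}}$ relative to your formula, forcing a larger $\sigma_{\text{F}}$. The paper closes this with a parallel-composition lemma for GDP (\lemref{lem:parallel_composition_gdp}): because the per-class subsets are disjoint, the inner-loop mechanisms within one outer iteration jointly cost only $\max_j\mu_j=\mu_{\text{F}}$-GDP, so only the $T$ outer iterations compose sequentially. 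Without this observation (or an equivalent argument that each private record is touched by at most one inner-loop query per outer iteration), the identity $\mu_{\text{F}}=p\sqrt{T(e^{1/\sigma_{\text{F}}^2}-1)}$ does not follow from the algorithm as written, and the rest of your (otherwise correct) algebra starts from an unjustified premise.
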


\begin{table*}[t]
  \centering
  \small
  \setlength\tabcolsep{1pt}

  \begin{tabular}{c|ccc|ccc|ccc}
    \toprule
    & \multicolumn{3}{c|}{CIFAR-10} & \multicolumn{3}{c|}{CIFAR-100} & \multicolumn{3}{c}{CelebA} \\
    \midrule
    \IPC & 1 & 10 & 50 & 1 & 10 & 50 & 1 & 10 & 50 \\
    \midrule
    \multicolumn{10}{c}{$\varepsilon=1$} \\
    \midrule
    DP-MERF &\(14.6_{\pm 0.7}\) &\(19.4_{\pm 0.3}\) & \(21.0_{\pm 0.4}\) &\(2.3_{\pm 0.1}\) &\(3.4_{\pm 0.0}\) & \(3.6_{\pm 0.1}\)& \(53.6_{\pm 0.4}\) &\(64.6_{\pm 0.8}\) & \(68.3_{\pm 0.3}\)\\
    PE & \(15.2_{\pm 0.2}\)& \(25.0_{\pm 1.0}\)& \(37.7_{\pm 1.0}\)& \(3.0_{\pm 0.3}\)& \(7.4_{\pm 0.3}\)& \(11.1_{\pm 0.2}\)& \(54.8_{\pm0.3}\) &\(65.1_{\pm 0.3}\) & \(70.6_{\pm 0.3}\)\\
    PrivImage & \(14.7_{\pm 0.2}\)& \(21.2_{\pm 0.3}\)& \(38.0_{\pm 0.2}\)& \(1.2_{\pm 0.1}\)& \(2.0_{\pm 0.1}\)& \(3.8_{\pm 0.1}\)& \(54.5_{\pm 1.3}\) &\(64.2_{\pm 0.7}\) &\(72.1_{\pm 0.6}\) \\
    \midrule
    PSG &\(26.9_{\pm 0.7}\) &\(33.7_{\pm 0.2}\) &\(35.9_{\pm 0.6}\) &\(4.9_{\pm 0.3}\) &\(8.3_{\pm 0.3}\) &\(10.3_{\pm 0.1}\) &\(66.8_{\pm 3.8}\) &\(72.8_{\pm 2.1}\) & \(76.2_{\pm 2.3}\)\\
    NDPDC & \(26.1_{\pm 0.4}\)&\(39.8_{\pm 0.2}\) &\(42.6_{\pm 0.6}\)&\(7.8_{\pm 0.2}\) &\(10.9_{\pm 0.1}\) &\(11.5_{\pm 0.3}\) &\(66.0_{\pm 1.7}\) &\(77.7_{\pm 0.5}\) & \(80.4_{\pm 0.6}\)\\
    \textbf{\lib (Ours)} & \(\bf{29.8_{\pm0.5}}\)&\(\bf{53.5_{\pm0.3}}\)&\(\bf{56.9_{\pm0.3}}\) &\(\bf{15.8_{\pm0.3}}\) & \(\bf{21.1_{\pm 0.4}}\)& \(\bf{25.9_{\pm 0.2}}\)& \(\bf{67.8_{\pm 0.7}}\)&\(\bf{78.7_{\pm 0.3}}\) &\(\bf{82.1_{\pm 0.1}}\)\\
    \midrule
    \multicolumn{10}{c}{$\varepsilon=10$} \\
    \midrule
    DP-MERF & \(14.9_{\pm 0.3}\) & \(21.7_{\pm 0.3}\) & \(22.9_{\pm 0.5}\)& \(2.8_{\pm 0.3}\) & \(3.4_{\pm 0.2}\) & \(3.8_{\pm 0.2}\) &\(54.1_{\pm 0.6}\) &\(68.5_{\pm 0.2}\) & \(72.2_{\pm 0.7}\)\\
    PE & \(16.0_{\pm 0.4}\)& \(29.8_{\pm 0.4}\)& \(42.1_{\pm 0.3}\)& \(3.4_{\pm 0.1}\)& \(10.0_{\pm 0.2}\)& \(15.1_{\pm 0.2}\)& \(56.1_{\pm 0.4}\)& \(67._5{\pm 0.5}\)& \(75.3_{\pm 0.4}\)\\
    PrivImage & \(15.2_{\pm 0.4}\)& \(27.7_{\pm 0.5}\)& \(39.6_{\pm 0.5}\)& \(1.9_{\pm 0.1}\)& \(2.8_{\pm 0.2}\)& \(4.3_{\pm 0.2}\)& \(55.2_{\pm 0.4}\)& \(65.7_{\pm 0.2}\)& \(76.8_{\pm 0.3}\)\\
    \midrule
    PSG &\(27.9_{\pm 0.2}\) &\(40.3_{\pm 0.4}\) &\(47.2_{\pm 0.6}\) &\(10.4_{\pm 0.2}\) &\(18.0_{\pm 0.2}\) &\(19.7_{\pm 0.3}\) &\(67.1_{\pm 1.3}\) &\(77.0_{\pm 0.5}\) &\(81.6_{\pm 0.9}\) \\
    NDPDC &\(26.6_{\pm 1.2}\) &\(46.8_{\pm 0.6}\) &\(53.9_{\pm 0.2}\) &\(10.7_{\pm 0.2}\) &\(17.5_{\pm 0.7}\) &\(19.2_{\pm 0.3}\) &\(67.0_{\pm 2.2}\) &\(78.1_{\pm 1.1}\) & \(82.3_{\pm 0.7}\)\\
    \textbf{\lib (Ours)} &\(\bf{35.6_{\pm 0.9}}\) &\(\bf{59.0_{\pm 0.3}}\) &\(\bf{65.5_{\pm 0.4}}\) &\(\bf{19.3_{\pm0.3}}\) & \(\bf{27.9_{\pm 0.4}}\)& \(\bf{32.3_{\pm 0.2}}\)& \(\bf{69.3_{\pm 0.4}}\)&\(\bf{81.4_{\pm 0.4}}\) & \(\bf{85.7_{\pm 0.3}}\)\\
    \midrule
    \multicolumn{10}{c}{without DP guarantees} \\
    \midrule
    {DM} &\(28.6_{\pm 0.6}\) &\(48.9_{\pm 0.6}\) & \(64.0_{\pm 0.4}\)& \(11.4_{\pm 0.3}\)& \(29.7_{\pm 0.3}\)& \(43.6_{\pm 0.4}\)& \(68.4_{\pm 0.6}\)&\(80.1_{\pm 0.3}\) & \(85.2_{\pm 0.4}\)\\
    {NCFM} &\(49.5_{\pm 0.3}\) & \(71.8_{\pm 0.3}\)& \(77.4_{\pm 0.3}\)& \(34.4_{\pm 0.5}\)& \(48.7_{\pm 0.3}\)& \(54.7_{\pm 0.2}\)& \(74.6_{\pm 0.4}\)& \(84.8_{\pm 0.3}\)& \(88.2_{\pm 0.4}\)\\
    {Whole Dataset} & \multicolumn{3}{c|}{\(84.8_{\pm 0.1}\)} & \multicolumn{3}{c|}{\(56.2_{\pm 0.3}\)} & \multicolumn{3}{c}{\(95.6_{\pm 0.3}\)} \\
    \bottomrule
  \end{tabular}
  \caption{\textbf{Comparison with previous methods on test accuracy (\%).} Results are averaged over three random seeds. \IPC denotes images per class, and $\varepsilon$ is privacy budget. NCFM is the SOTA non-private distillation algorithm. }
  \label{tab:res_lowresolution}
\end{table*}
\begin{table*}[t]
  \centering
  \small
  \setlength\tabcolsep{3pt}
  \begin{tabular}{l|cc|cc|cc|cc}
    \toprule
    Method                  & \multicolumn{4}{c|}{CIFAR-10} & \multicolumn{4}{c}{CIFAR-100} \\
    \midrule
    $\varepsilon$           & \multicolumn{2}{c|}{1} & \multicolumn{2}{c|}{10} & \multicolumn{2}{c|}{1} & \multicolumn{2}{c}{10} \\
    \midrule
    Metric                  & Test Acc. & TPR@0.1\%FPR & Test Acc. & TPR@0.1\%FPR & Test Acc. & TPR@0.1\%FPR & Test Acc. & TPR@0.1\%FPR \\
    \midrule
    PSG                     & 35.9      & 0.08         & 47.2      & 0.13         & 10.3      & 0.09         & 19.7      & 0.20 \\
    NDPDC                   & 42.6      & 0.10         & 53.9      & 0.14         & 11.5      & 0.11         & 19.2      & 0.18 \\
    \textbf{DP-GENG (Ours)} & 55.9      & 0.10         & 65.5      & 0.14         & 25.9      & 0.12         & 32.3      & 0.17 \\
    \bottomrule
  \end{tabular}
  \caption{Comparison of different methods against MIA with \IPC=50. For reference, a standard non-DP method (DM) achieves Test Acc./TPR@0.1\%FPR of 63.0/0.82 on CIFAR-10 and 43.6/1.06 on CIFAR-100.}
  \label{tab:defense_mia}
\end{table*}

\section{Experiment}
\label{sec:exp}
\subsection{Experiment Setup}
\noindent
\textbf{Datasets and Models.}
Unlike previous works on DP-DD~\citep{chen2022private, zheng2024differentially}, which focus on simple datasets such as MNIST~\citep{lecun1998gradient} and FashionMNIST~\citep{xiao2017fashion}, we conduct experiments on the CIFAR-10~\citep{krizhevsky2009learning}, CIFAR-100~\citep{krizhevsky2009learning}, and CelebA~\citep{liu2015deep} datasets. These datasets present significant challenges for DP-DD, especially under a limited privacy budget. Following previous studies~\citep{guo2024lossless, zheng2024differentially}, we use ConvNet~\citep{sagun2017empirical} as the default backbone architecture to evaluate the utility of the distilled dataset. Additionally, we present results for other models (e.g., ResNet~\citep{he2016deep}) in the \appref{appendix:cross_architecture_evaluation}, which benefit from the prior knowledge in DP-generated data to enhance cross-architecture generalizability.

\paragraph{Baselines.} We compare our proposed method, \lib, with state-of-the-art DP-DD algorithms and distilled datasets directly generated from DP data generators. For comprehensive evaluation, all experiments are conducted using three different random seeds, and we report both the mean and variance of the results.
\begin{itemize}[leftmargin=12pt,nosep]
  \item DP-DD Methods: (i) Gradient matching based method: PSG~\citep{chen2022private}; (ii) Distribution matching based method: NDPDC~\citep{zheng2024differentially}.
  \item DP Data Generator Methods: (i) Input level method: DP-MERF~\citep{harder2021dp}; (ii) Model-level method: PrivImage~\citep{li2024privimage}; (iii) Output level method: PE~\citep{lin2023differentially}.
\end{itemize}
To illustrate the gap between DP-DD and standard DD algorithms, we select DM~\citep{zhao2023dataset} and NCFM~\citep{wang2025ncfm} for comparison, neither of which has DP privacy guarantees.

\paragraph{Experimental Setup.}
Due to space constraints, we defer our implementation details to \appref{appendix:experimental_setup}.

\begin{table*}[t]
  \centering
  \begin{minipage}[t]{0.48\textwidth}
    \centering
    \textbf{(a)}

    \small
    \setlength\tabcolsep{5pt}
    \begin{tabular}{cccc}
      \toprule
      DP-Init & DP-FM & DP-EG & Test Accuracy (\%) \\
      \midrule
      \CheckmarkBold &  &  &\(48.7_{\pm 0.4}\)\\
      &\CheckmarkBold  &  &\(53.2_{\pm 0.2}\)\\
      \CheckmarkBold &  \CheckmarkBold & & \(60.8_{\pm 0.3}\)\\
      \CheckmarkBold &  \CheckmarkBold & \CheckmarkBold & \(\bf{65.5_{\pm0.4}}\)\\
      \bottomrule
    \end{tabular}
    \label{tab:ablation_components}
  \end{minipage}\hfill
  \begin{minipage}[t]{0.48\textwidth}
    \centering
    \textbf{(b)}

    \small
    \setlength\tabcolsep{6pt}
    \begin{tabular}{l|cc}
      \toprule
      \multirow{2}{*}{DP Generator} & \multicolumn{2}{c}{Test Accuracy (\%)} \\
      \cmidrule{2-3}
      & CIFAR-10 & CelebA \\
      \midrule
      DP-MERF & \(56.4_{\pm 0.3}\) & \(82.4_{\pm 0.2}\) \\
      PE & \(\bf{65.5_{\pm 0.4}}\) & \(83.1_{\pm 0.3}\) \\
      PrivImage & \(64.7_{\pm 0.4}\) & \(\bf{85.7_{\pm 0.2}}\) \\
      \bottomrule
    \end{tabular}
    \label{tab:dp_generator_selection}
  \end{minipage}
  \caption{\textbf{Ablation studies.} (a) Effectiveness of different components in \lib (CIFAR-10, \IPC=50, \(\epsilon=10\)). Init, FM and EG indicate Initialization, Feature Matching and Expert Guidance, respectively. \CheckmarkBold indicates utilized component. (b) Comparison of different DP generators used in \lib with \IPC=50.}
  \label{tab:ablation_components_and_dp_generator_selection}
\end{table*}
\subsection{Comparing Utility of Distilled Datasets}
To comprehensively evaluate the effectiveness of our proposed method, we conduct experiments to compare the utility of our distilled datasets against state-of-the-art approaches. We measure utility by the test accuracy of models trained on these distilled datasets. Table~\ref{tab:res_lowresolution} presents the results across CIFAR-10, CIFAR-100, and CelebA datasets with varying privacy budgets (\(\epsilon\)=1 and \(\epsilon\)=10) and \IPC settings. Higher test accuracy indicates better utility preservation while maintaining DP guarantees.

Our method, \lib, consistently achieves superior results under identical privacy budgets and \IPC settings. For instance, on CIFAR-10 with $\epsilon$=10 and \IPC=50, \lib attains 65.5\% accuracy, markedly surpassing existing DP-DD approaches. Notably, directly generating DP data at the \IPC scale yields lower utility than DP-DD methods, as it does not effectively distill information from private data. \lib further bridges the performance gap to standard DD, significantly enhancing distilled data usability under DP guarantees. However, we note that on CIFAR-100, both DP-DD methods exhibit a significant performance gap compared to standard DD. We attribute this to the fact that CIFAR-100 has fewer samples per class. As a result, as illustrated in \lemref{lem:subsampling_theorem}, achieving DP guarantees requires injecting more noise, which reduces the utility of the distilled dataset.

To assess the generalizability of distilled datasets, we conducted extensive cross-architecture evaluations, with detailed results presented in \appref{appendix:cross_architecture_evaluation}. \lib demonstrates superior generalizability compared to other DP-DD methods when evaluated on unseen architectures.

\subsection{Comparing Privacy of Distilled Datasets Through Membership Inference Attacks}
To evaluate privacy protection, we analyze our method's resistance to MIAs by employing the LiRA~\citep{carlini2022membership,zhao2025doesgit, wang2024garrison}, strictly following their implementation. \tabref{tab:defense_mia} illustrates the utility-privacy trade-off, where utility is measured by Test Accuracy and privacy by TPR@0.1\%FPR. The results demonstrate that \lib achieves comparable MIA resistance to other methods while offering superior data utility under the same privacy budget. This validates the correctness of \lib's overall privacy budget, derived via GDP composition (\lemref{lem:gdp_composition}) and GDP-to-DP conversion (\lemref{lem:conversion}).

\subsection{Ablation Study}

\paragraph{The Impact of Individual Components in \libn.}
To understand the contribution of each component in our method, we conduct an ablation study by selectively enabling different components. \tabref{tab:ablation_components_and_dp_generator_selection} presents the test accuracy on CIFAR-10 under different configurations. We analyze how each component (DP-Init, DP-FM, and DP-EG) contributes to the overall performance, demonstrating the necessity of our multi-stage approach for achieving optimal results.

\paragraph{The Impact of DP Generators.}
The DP-generated data required by \lib can be produced by various DP generators. We investigate their impact on the final distilled dataset utility. \tabref{tab:ablation_components_and_dp_generator_selection} compares the test accuracies on CIFAR-10 and CelebA when \lib utilizes DP-generated data from DP-MERF, PE, and PrivImage, all operating under the same privacy budget. The choice of DP generator can be dataset-dependent. For example, PE, as a training-free method, may exhibit reduced generation quality if there is a significant distributional divergence between the private and public data. This analysis helps identify suitable DP generators for maximizing utility while maintaining privacy guarantees.

\begin{table}[h]
  \centering
  \small
  \setlength\tabcolsep{4pt}
  \begin{tabular}{c|c|ccc|c}
    \toprule
    $\epsilon_{\text{total}}$ & $\mu_{\text{total}}^{\dagger}$ & $\mu_\text{G}$ & $\mu_\text{F}$ & $\mu_\text{E}$ & Test Accuracy (\%) \\
    \midrule
    \multirow{4}{*}{10} & 2.00 & 1.21 & 1.17 & 1.17 & \(57.2_{\pm 0.2}\) \\
    & 2.00 & 0.27 & 1.07 & 1.66 & \(59.7_{\pm 0.4}\) \\
    & 2.00 & 0.27 & 1.75 & 0.92 & \(61.3_{\pm 0.5}\) \\
    & \bf{2.00} & \bf{0.27} & \bf{1.48} & \bf{1.30} & \(\bf{65.5_{\pm 0.3}}\) \\
    \midrule
    20 & \bf{3.44} & \bf{0.50} & \bf{2.50} & \bf{2.31} & \(\bf{68.7_{\pm 0.4}}\) \\
    \bottomrule
  \end{tabular}
  \caption{\textbf{Ablation study on different privacy budget allocations.} The \textbf{bold rows} are derived from our strategy. Results are for CIFAR-10 with \IPC=50.
  $^{\dagger}\mu_{\text{total}}$ is derived from the target $\epsilon_{\text{total}}$ and $\delta_{\text{total}}$.}
  \label{tab:privacy_allocation}
\end{table}

\paragraph{The Impact of Privacy Budget Allocation Strategy.}
The allocation of privacy budgets across different components of \lib is a crucial design choice. \tabref{tab:privacy_allocation} presents an analysis of various privacy budget allocations on CIFAR-10 with \IPC=50. The results indicate that our strategy can significantly improve the utility of the distilled dataset. Note that DP data generation can operate with a smaller privacy budget due to the capabilities of pre-trained generators. For training the expert model, allocating a large privacy budget would introduce excessive noise in the feature matching stage, while a small budget would lead to an inaccurate class distribution for guidance.

\subsection{Visualization of Distilled Dataset}
To provide qualitative insights into the performance of our method, we visualize the distilled datasets for CIFAR-10, CIFAR-100 and CelebA, generated by different DP-DD methods. In \appref{appendix:visualization_of_distilled_dataset}, we present representative samples from PSG, NDPDC, and our method for these datasets. We observe that \lib generates data with much higher realism compared to the other methods, which enhances the cross-architecture generalization~\citep{sun2024diversity}.

\section{Conclusion}
This paper introduces \libn, a novel framework for differentially private dataset distillation (DP-DD) that overcomes the utility and realism challenges faced by existing DP-DD methods. By strategically leveraging DP-generated data for initialization, training feature extractors, and employing an expert model for guided distillation, we observe that \lib significantly outperforms state-of-the-art methods. Extensive experiments demonstrate \lib's superior performance in terms of dataset utility and robustness against privacy attacks, positioning it as a new, effective paradigm for trustworthy dataset distillation on sensitive data.

\clearpage
\section*{Acknowledgments}
This work was partly supported by the National Key Research and Development Program of China under No. 2024YFB3908400, NSFC under No. 62402418, 62402148, the Zhejiang Province's 2025 "Leading Goose + X" Science and Technology Plan under grant No.2025C02034, the Key R\&D Program of Ningbo under No. 2024Z115, and the China Postdoctoral Science Foundation under No. 2024M762829.
\bibliography{resources/aaai2026}

\clearpage
\appendix

\section{Notation}
\begin{table}[h]
    \centering
    \label{tab:notations-summary}
    \setlength{\tabcolsep}{1pt}
    \resizebox{\linewidth}{!}{
        \begin{tabular}{cp{6cm}}
            \toprule
            \textbf{Notation}   & \textbf{Description} \\
            \midrule
            $\mathcal{T}$, $\hat{\mathcal{T}}$ & the private dataset and its synthetic version generated under DP guarantees (DP-generated data).\\
            $\mathcal{S}$ & the distilled dataset. \\
            $\{\phi_{\text{F}}^{n}\}_{n=1}^{N}$, $\phi_{\text{E}}$ & the feature extractors and the expert model. \\
            $\sigma_{\text{G}}$, $\sigma_{\text{F}}$, $\sigma_{\text{E}}$ & the noise multiplier for DP data generation, feature matching, and expert model training, respectively. \\
            $\mu_{\text{G}}$, $\mu_{\text{F}}$, $\mu_{\text{E}}$ & the GDP parameter for DP data generation, feature matching, and expert model training, respectively. \\
            \IPC & the number of images per class in the distilled dataset, reflecting the degree of compression. \\
            \bottomrule
        \end{tabular}
    }
    \captionof{table}{Summary of notations.}
\end{table}

\section{Definitions}
\label{definition}
\begin{definition}[Differential Privacy~\citep{dwork2006calibrating}]
  A randomized algorithm \(\mathcal{A}\) is \((\epsilon, \delta)\)-DP if for any datasets \(D, D'\) that differ in one element, and for all \(\cO \subseteq \mathcal{R}\),
  \begin{equation}
    \Pr[\mathcal{A}(D) \in \cO] \leq \mathrm{e}^{\epsilon} \cdot \Pr[\mathcal{A}(D') \in \cO] + \delta
  \end{equation}
  where \(\Pr[\mathcal{A}(D) \in \cO]\) is the probability that \(\mathcal{A}(D)\) outputs a result in \(S\).
\end{definition}

\begin{definition}[$f$-DP and $\mu$-Gaussian DP~\citep{dong2022gaussian}]
A randomized algorithm $\mathcal{A}$ is said to satisfy $f$-differential privacy ($f$-DP) if for any datasets $D$ and $D'$ differing in one element, the following holds: $T(\mathcal{A}(D), \mathcal{A}(D')) \geq f$, where $T(\cdot, \cdot)$ is the trade-off function defined above.

Specifically, $\mathcal{A}$ satisfies $\mu$-Gaussian differential privacy ($\mu$-GDP) if it is $G_\mu$-DP, where $G_\mu(x) = \Phi(\Phi^{-1}(1-x) - \mu), \quad \mu \geq 0$, and $\Phi$ denotes the cumulative distribution function (cdf) of the standard normal distribution $\mathcal{N}(0,1)$.
\end{definition}

\section{}
\label{appendix:expert_model_training}
\begin{algorithm}
  \caption{The overview of DP-GenG}
  \label{alg:workflow}
  \textbf{Input}: Private dataset $\mathcal{T}$; Public dataset $\mathcal{D}_{\text{pub}}$ used to construct the generator $\mathcal{G}$; Number of feature extractors $N$; Number of classes $n$; Number of images per class \IPC; Number of Distillation iterations $I$; Clipping bound $C$; Learning rates $\eta_\text{F}$ and $\eta_\text{E}$; Budget allocation criterion $\mathcal{A}$. \\
  \textbf{Parameters}: DP-generated dataset $\hat{\mathcal{T}}$; Noise multipliers $\sigma_\text{G}$,$\sigma_\text{F}$ and $\sigma_\text{E}$. \\
  \textbf{Output}: The distilled dataset $\mathcal{S}$ with DP guarantees.
  \begin{algorithmic}[1]
  \STATE Determine the noise multipliers $\sigma_\text{G}$,$\sigma_\text{F}$, $\sigma_\text{E}$ based on the budget allocation criterion $\mathcal{A}$
  \STATE Generate DP synthetic data $\hat{\mathcal{T}}$ using $\mathcal{G}(\mathcal{T}, \mathcal{D}_{\text{pub}}, \sigma_\text{G})$ \COMMENT{privacy parameter $\mu_\text{G}$}
  \STATE Pre-train $N$ feature extractors $\{\phi_{\text{F}}^{n}\}_{n=1}^{N}$ and an expert model $\phi_{\text{E}}$ with $\sigma_\text{E}$ in \algref{alg:gradient_sanitizing}
  \STATE Initialize $\mathcal{S}$ with ($n$\(\times\) \IPC) samples of $\hat{\cT}$ via feature-level k-means clustering: $\text{Sample}(\hat{\mathcal{T}}, \text{IPC})$
  
  \FOR{$i=1,2,\ldots,I$}
      \STATE Randomly select a feature extractor $\phi_{\text{F}}$
      \FOR{$j=1,2,\ldots,n$} 
        \STATE Sample a mini-batch $\mathcal{B}^{\mathcal{T}}$ from $\mathcal{T}^j$ and a mini-batch $\mathcal{B}^{\mathcal{S}}$ from $\mathcal{S}^j$ \COMMENT{satisfy \lemref{lem:subsampling_theorem}}\\
        \STATE Calculate the feature matching loss through \eqref{eq:feature_matching_loss_redefined} \COMMENT{privacy parameter $\mu_\text{F}$}\\
        \STATE Update the $\mathcal{S}$ by $\mathcal{S} = \mathcal{S} - \eta_\text{F}\nabla_{\mathcal{S}}\mathcal{L}_{\text{F}}$
        \STATE Sample a mini-batch $\mathcal{R}^{\boldsymbol{y}^{\cS}}$ from $\hat{\mathcal{T}}^j$
        \STATE Calculate the expert guiding loss through \eqref{eq:expert_guidance_loss_redefined} \COMMENT{privacy parameter $\mu_\text{E}$}\\
        \STATE Update the $\mathcal{S}$ by $\mathcal{S} = \mathcal{S} - \eta_\text{E}\nabla_{\mathcal{S}}\mathcal{L}_{\text{E}}$
      \ENDFOR
  \ENDFOR
  \RETURN $\mathcal{S}$
  \end{algorithmic}
  \end{algorithm}

\begin{algorithm}
    \caption{DP Fine-tuning Expert Model}
    \label{alg:gradient_sanitizing}
    \begin{algorithmic}[1]
    \STATE \textbf{Input}: Private dataset $\mathcal{T}$, a model $\phi$ pretrained on DP generated data $\hat{\mathcal{T}}$, Batch size $b$, Learning rate $\eta$, Clip coefficient $C$, Gaussian variance $\sigma_{E}^{2}$, Max iterations $T_{m}$
        
    \STATE $T \gets 0$
    \WHILE{$T < T_{m}$}
        \STATE Randomly sample the image training batch $x_{1:b}$ from $\mathcal{T}$
        \STATE Calculate gradient $g_{1:b} \gets \nabla_{\phi}\mathcal{L}\left({\phi}\left(x_{1:b}\right)\right)$
        \STATE Scale gradient $g_{1:b}' \gets \min\left\{1,\frac{C}{\|g_{1:b}\|_{2}}\right\}g_{1:b}$
        \STATE Add Gaussian noise $\hat{g} \gets \frac{1}{b}\sum\limits_{i=1}^{b}g_{i}' + \frac{\sigma_{E}C}{b}e_{i}$ where $e \sim \mathcal{N}(0,1)$
        \STATE Update parameter $\phi \gets \phi - \eta\hat{g}$
        \STATE $T \gets T + 1$
    \ENDWHILE
    
    \STATE \textbf{Output}: The well-trained expert model: $\phi_\text{E}$
    \end{algorithmic}
    \end{algorithm}

\section{Target FID Score and Accuracy}
\label{appendix:target_FID_accuracy}
Our budget allocation strategy prioritizes determining the privacy budget for DP data generation $\mu_\text{G}$ and expert model training $\mu_\text{E}$, followed by calculating the privacy budget for DP feature matching $\mu_\text{F}$ according to \lemref{lem:gdp_composition}.

For the privacy budget of DP data generation ($\mu_\text{G}$), higher privacy budget leads to lower FID scores, resulting in generated data that is visually closer to the private data. Through empirical analysis, we found that once the FID score of the DP generated data reaches a certain threshold during DP data generation, further increasing the privacy budget does not significantly improve the distillation effect. Therefore, we allocate $\mu_\text{G}$ to achieve an FID score that is 1.2 times the FID score obtained with $\mu_\text{total}$. When $\epsilon=10$ and $\delta=10^{-5}$, $\epsilon_\text{total}$ is 2.0, and $\mu_\text{G}$ is typically set to 0.27.

For the privacy budget of expert model training ($\mu_\text{E}$), we need to ensure that the expert model achieves appropriate classification accuracy while maintaining sufficient privacy budget for feature matching. The expert model training process consists of two stages: initial training on DP-generated data, followed by DPSGD fine-tuning on private data. Based on our empirical analysis, $\mu_\text{E}$ is typically chosen to achieve 90\% of the accuracy that would be obtained when training with the full budget $\mu_\text{total}$ using only DPSGD on private data.

\section{Limitations}
\label{appendix:limitations}
While \lib significantly outperforms existing DP-DD methods, it does not match the performance of standard DD methods on datasets like CIFAR-100 (with 10\% drop when $\epsilon=10$ and \IPC=50), where each class has fewer samples. This necessitates injecting a larger amount of noise to maintain the same level of privacy guarantees. Additionally, \lib relies on a pre-trained generator (e.g., a pre-trained diffusion model) to produce DP-generated data, and its performance suffers when such a generator is unavailable.

\section{Privacy Analysis}
\label{appendix:privacy_analysis}

\begin{theorem}[DP-GENG Privacy Budget Allocation]
Given a total privacy budget $(\epsilon_{\mathrm{total}}, \delta_{\mathrm{total}})$, the DP-GENG algorithm consists of three main components: DP data generation (with privacy parameter $\mu_\mathrm{G}$), feature matching (with $\mu_\mathrm{F}$), and expert model training (with $\mu_\mathrm{E}$). The required noise scale $\sigma_\mathrm{F}$ for feature matching, under batch sampling probability $p$ and $T$ iterations, is:
\[
  \sigma_\mathrm{F} = \left[ \ln \left( \frac{T p^2}{\mu_{\mathrm{total}}^2 - \mu_\mathrm{G}^2 - \mu_\mathrm{E}^2} + 1 \right) \right]^{-1/2},
\]
where $\mu_{\mathrm{total}}$ is determined by $(\epsilon_{\mathrm{total}}, \delta_{\mathrm{total}})$ via
\[
  \delta = \Phi\left(-\frac{\epsilon}{\mu} + \frac{\mu}{2}\right) - \\e^{\epsilon} \Phi\left(-\frac{\epsilon}{\mu} - \frac{\mu}{2}\right),
\]
with $\Phi$ the standard normal CDF.
\end{theorem}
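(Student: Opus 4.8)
The plan is to track every private release made by \lib, bound each one in the GDP framework, compose the bounds, and then invert the composed expression to solve for the free parameter $\sigma_{\mathrm{F}}$. First I would identify the three privacy-consuming subroutines. DP data generation adds Gaussian noise calibrated to its sensitivity, so by the Gaussian-mechanism lemma it is $\mu_{\mathrm{G}}$-GDP, and because the generated data enters the rest of the pipeline only through a data-independent map, \thmref{thm:post_processing} means it incurs no extra cost. The expert fine-tuning of \algref{alg:gradient_sanitizing} is $\mu_{\mathrm{E}}$-GDP by construction. The feature-matching stage, through its $T$ noisy queries of the form \eqref{eq:feature_matching_loss_redefined}, is $\mu_{\mathrm{F}}$-GDP. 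Applying GDP composition (\lemref{lem:gdp_composition}) to these three yields that \lib as a whole is $\mu_{\mathrm{total}}$-GDP with $\mu_{\mathrm{total}}^2 = \mu_{\mathrm{G}}^2 + \mu_{\mathrm{F}}^2 + \mu_{\mathrm{E}}^2$.

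Next I would pin down $\mu_{\mathrm{total}}$ from the target budget: by \lemref{lem:conversion}, $\mu$-GDP is equivalent to $(\epsilon,\delta(\epsilon))$-DP with $\delta(\epsilon)=\Phi(-\epsilon/\mu+\mu/2)-\mathrm{e}^{\epsilon}\Phi(-\epsilon/\mu-\mu/2)$, and since $\delta(\epsilon)$ is strictly increasing in $\mu$ for a fixed $\epsilon$, the equation $\delta_{\mathrm{total}}=\delta(\epsilon_{\mathrm{total}})$ has a unique solution, which is the $\mu_{\mathrm{total}}$ in the statement. Rearranging the composition identity gives $\mu_{\mathrm{F}}^2 = \mu_{\mathrm{total}}^2 - \mu_{\mathrm{G}}^2 - \mu_{\mathrm{E}}^2$ (meaningful precisely when the right-hand side is nonnegative, the feasibility condition the allocation strategy is built to maintain). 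To convert $\mu_{\mathrm{F}}$ into $\sigma_{\mathrm{F}}$, I would use that each feature-matching round clips the mean feature to $\ell_2$-norm $C$ and adds $\mathcal{N}(0,\sigma_{\mathrm{F}}^2 C^2 I)$, so a single full query is $(1/\sigma_{\mathrm{F}})$-GDP by the Gaussian-mechanism lemma; combining Poisson mini-batch subsampling at rate $p=B/n$ with the GDP subsampling theorem and the $f$-DP central limit theorem over $T$ rounds gives the closed form $\mu_{\mathrm{F}} = p\sqrt{T(\mathrm{e}^{1/\sigma_{\mathrm{F}}^2}-1)}$ quoted in \secref{method:DPFM}. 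Solving this relation for $\sigma_{\mathrm{F}}$---square, divide by $Tp^2$, add one, take the natural logarithm, and invert---and substituting $\mu_{\mathrm{F}}^2 = \mu_{\mathrm{total}}^2 - \mu_{\mathrm{G}}^2 - \mu_{\mathrm{E}}^2$ yields the stated expression.

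The main obstacle is expected to be this last conversion rather than the composition bookkeeping. One must (i) verify that the per-round query really has $\ell_2$-sensitivity controlled by the clipping constant $C$, so the Gaussian-mechanism lemma applies with parameter $1/\sigma_{\mathrm{F}}$; (ii) justify that mini-batch subsampling together with $T$-fold composition produces an \emph{exact} GDP parameter $p\sqrt{T(\mathrm{e}^{1/\sigma_{\mathrm{F}}^2}-1)}$ rather than merely an upper bound, which is where the asymptotic $f$-DP machinery is required; and (iii) count $T$ correctly---one noisy feature-matching release per class per outer iteration in \algref{alg:workflow}. Once these points are settled, the rest reduces to elementary algebra and the monotonicity of $\delta(\cdot)$.
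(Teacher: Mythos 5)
Your overall route matches the paper's: invert the GDP-to-DP conversion to get $\mu_{\mathrm{total}}$, use GDP composition to isolate $\mu_{\mathrm{F}}^2 = \mu_{\mathrm{total}}^2 - \mu_{\mathrm{G}}^2 - \mu_{\mathrm{E}}^2$, apply the subsampling theorem to write $\mu_{\mathrm{F}} = p\sqrt{T(\mathrm{e}^{1/\sigma_{\mathrm{F}}^2}-1)}$, and solve. The conversion, composition, and algebra steps are all handled the same way the paper handles them.

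However, there is one substantive idea you flag as an obstacle but never actually supply, and without it the derivation does not close. You correctly observe that \algref{alg:workflow} makes one noisy feature-matching release \emph{per class per outer iteration}, i.e.\ $T \cdot n$ releases in total. If you compose all of these sequentially via \lemref{lem:gdp_composition}, you get a per-stage parameter scaling like $\sqrt{Tn}$, and the final formula would contain $Tn\,p^2$ rather than $T p^2$ --- contradicting the statement. The paper's resolution, which your proposal is missing, is a \emph{parallel composition} argument for GDP: within a single outer iteration, the $n$ class-wise queries each touch a disjoint subset $\mathcal{T}_j$ of the private data (the partition by class label), so their joint privacy cost is $\max_j \mu_j = \mu_{\mathrm{F}}$, not the root-sum-of-squares. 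Only the $T$ outer iterations then compose sequentially, which is exactly why $T$ (and not $Tn$) appears inside the logarithm. You should also note that this disjointness is what makes the per-class sampling probability $p = B/n_j$ the right quantity to feed into the subsampling theorem. Everything else in your plan --- the monotonicity of $\delta(\cdot)$ in $\mu$ guaranteeing a unique $\mu_{\mathrm{total}}$, the sensitivity-$C$ clipping justifying the per-round Gaussian mechanism, and the asymptotic nature of the $f$-DP central limit theorem --- is consistent with (and in places more careful than) the paper's own write-up.
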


\begin{proof}
We prove the theorem in several steps, referencing \lemref{lem:conversion}, \lemref{lem:gdp_composition}, \lemref{lem:subsampling_theorem}.

\paragraph{Step 1: Conversion from $(\epsilon,\delta)$-DP to $\mu$-GDP.}~

By \lemref{lem:conversion}, for any $(\epsilon,\delta)$-DP mechanism, there exists a corresponding $\mu$-GDP mechanism such that
\begin{equation}
  \delta(\epsilon) = \Phi\left(-\frac{\epsilon}{\mu} + \frac{\mu}{2}\right) - e^{\epsilon} \Phi\left(-\frac{\epsilon}{\mu} - \frac{\mu}{2}\right).
\end{equation}

Given $(\epsilon_{\mathrm{total}}, \delta_{\mathrm{total}})$, we solve for $\mu_{\mathrm{total}}$.

\paragraph{Step 2: GDP Composition.}~

\vspace{1em}
By \lemref{lem:gdp_composition}, the sequential composition of $k$ mechanisms with $\mu_i$-GDP guarantees results in $\sqrt{\mu_1^2 + \cdots + \mu_k^2}$-GDP. For \lib, the three components yield
\begin{equation}
  \mu_{\mathrm{total}}^2 = \mu_\mathrm{G}^2 + \mu_\mathrm{F}^2 + \mu_\mathrm{E}^2.
\end{equation}
Thus,
\begin{equation}
  \mu_\mathrm{F}^2 = \mu_{\mathrm{total}}^2 - \mu_\mathrm{G}^2 - \mu_\mathrm{E}^2.
\end{equation}

\paragraph{Step 3: Parallel Composition for Disjoint Subsets.}~

\vspace{1em}
\begin{lemma}[Parallel Composition for GDP~\cite{smith2021making}]
\label{lem:parallel_composition_gdp}
Let $\{\mathcal{M}_i\}_{i=1}^k$ be a sequence of $k$ mechanisms, each satisfying $\mu_i$-GDP, and let $\{D_i\}_{i=1}^k$ be disjoint subsets of the dataset $\mathcal{D}$. The joint mechanism defined as the sequence of $\mathcal{M}_i(D \cap D_i)$ (possibly conditioned on the outputs of previous mechanisms) is $\max\{\mu_1, \mu_2, \ldots, \mu_k\}$-GDP.
\end{lemma}
In \algref{alg:workflow}, the private data is accessed through a nested loop structure: the outer loop iterates $T$ times, and the inner loop iterates over $n$ classes. In each inner loop, a subsample is drawn from the private data corresponding to a specific class, i.e., the dataset is partitioned into $n$ disjoint subsets $\{\cT_j\}_{j=1}^n$ according to class labels.

By \lemref{lem:parallel_composition_gdp}, if each mechanism in the inner loop operates on a disjoint subset $\cT_j$ and satisfies $\mu_j$-GDP, then the joint mechanism over all classes in a single outer iteration is $\max\{\mu_1, \ldots, \mu_n\}$-GDP. In our setting, since the same mechanism and sampling rate are applied to each class, all $\mu_j$ are equal, so the overall privacy loss per outer iteration is simply $\mu_\mathrm{F}$-GDP.

Therefore, the total privacy loss over all $T$ outer iterations is determined by the sequential composition of $T$ mechanisms, each with $\mu_\mathrm{F}$-GDP.  Thus, it suffices to account for the $T$ outer queries to the private data, as the inner loop over classes does not increase the privacy loss beyond that of a single class due to the disjointness property.

\paragraph{Step 4: Subsampling Theorem for GDP.}~

\begin{lemma}[Subsampling Theorem for GDP~\citep{bu2020deep}]
  \label{lem:subsampling_theorem}
The Poisson subsampled algorithm with probability \(p\) as number of iterations \(T\rightarrow\inf\), \(\cM\circ\text{Sample}_p\) satisfies
\begin{equation*}
  f=(pG_{1/\sigma}+(1-p)\mathrm{Id})^{\otimes T}\rightarrow G_{\mu},
\end{equation*}
where \(\mu=p\sqrt{T(\mathrm{e}^{1/\sigma^2}-1)}\). For batch size \(B\), \(\cM\circ\text{Sample}_p\) satisfies \(\frac{B}{n}\sqrt{T(\mathrm{e}^{1/\sigma^2}-1)}\)-GDP.
\end{lemma}

By \lemref{lem:subsampling_theorem}, for Poisson subsampling with probability $p$ over $T$ iterations, the resulting mechanism is $\mu_\mathrm{F}$-GDP with
\begin{equation}
  \mu_\mathrm{F} = p \sqrt{T (e^{1/\sigma_\mathrm{F}^2} - 1)}.
\end{equation}

\paragraph{Step 5: Solving for $\sigma_\mathrm{F}$.}~

Combining the above, we have
\begin{equation}
  \mu_\mathrm{F}^2 = p^2 T (e^{1/\sigma_\mathrm{F}^2} - 1),
\end{equation}
which gives
\begin{equation}
  \sigma_\mathrm{F} = \left[ \ln \left( \frac{p^2 T}{\mu_\mathrm{F}^2} + 1 \right) \right]^{-1/2}.
\end{equation}

Substituting $\mu_\mathrm{F}^2 = \mu_{\mathrm{total}}^2 - \mu_\mathrm{G}^2 - \mu_\mathrm{E}^2$ yields the claimed formula.
\end{proof}

\section{Experimental Setup}
\label{appendix:experimental_setup}
For CIFAR-10 and CIFAR-100, we use PE as the DP data generator, while for CelebA, we use PrivImage. The scale of DP-generated data is set to 50,000. For the initialization of the distilled dataset, we adopt the upsampling reparameterization technique with a factor parameter of 2. 

We train 20 feature extractors with different random initializations for 30 epochs on the DP-generated data $\hat{\mathcal{T}}$. For the expert model, we first train for 100 epochs on $\hat{\mathcal{T}}$, then fine-tune for 20 epochs using DPSGD on the private dataset $\mathcal{T}$. We use a batch size of 128 for the private dataset (50 for CIFAR-100). Feature matching is performed for 2,000 iterations with a learning rate of 5e-3 and a batch size of 128 (50 for CIFAR-100).

To evaluate the utility of the distilled dataset, we use the SGD optimizer with a learning rate of 0.01, momentum of 0.9, and weight decay of 0.0005 to train models. The DP generator is trained on 4 RTX A6000 GPUs, and feature matching is conducted on a single RTX A6000 GPU.
\section{Cross-Architecture Evaluation}
\label{appendix:cross_architecture_evaluation}
We evaluated the performance of DP-DD methods across different architectures by distilling datasets using ConvNet-3 and training models on ResNet-18 and DenseNet-121. The results are presented in Table~\ref{tab:cross_architecture}.
\begin{table}[!htbp]
    \centering
    \small
    \setlength{\tabcolsep}{6pt}
    \renewcommand{\arraystretch}{1.15}
    \begin{tabular}{l c ccc}
        \toprule
        \textbf{Method} & \textbf{IPC} & \textbf{ConvNet-3} & \textbf{ResNet-18} & \textbf{DenseNet-121} \\
        \midrule
        \multirow{2}{*}{PSG} 
            & 10 & 40.3{\scriptsize$\pm$0.4} & 36.5{\scriptsize$\pm$0.4} & 34.9{\scriptsize$\pm$0.8} \\
            & 50 & 47.2{\scriptsize$\pm$0.6} & 43.2{\scriptsize$\pm$0.4} & 40.9{\scriptsize$\pm$0.3} \\
        \midrule
        \multirow{2}{*}{NDPDC} 
            & 10 & 46.8{\scriptsize$\pm$0.6} & 42.3{\scriptsize$\pm$0.5} & 39.0{\scriptsize$\pm$0.1} \\
            & 50 & 53.9{\scriptsize$\pm$0.2} & 48.6{\scriptsize$\pm$0.6} & 45.5{\scriptsize$\pm$0.4} \\
        \midrule
        \multirow{2}{*}{\lib} 
            & 10 & \textbf{59.0}{\scriptsize$\pm$0.3} & \textbf{52.8}{\scriptsize$\pm$0.4} & \textbf{49.5}{\scriptsize$\pm$0.8} \\
            & 50 & \textbf{65.5}{\scriptsize$\pm$0.4} & \textbf{58.2}{\scriptsize$\pm$0.2} & \textbf{56.7}{\scriptsize$\pm$0.3} \\
        \bottomrule
    \end{tabular}
    \vspace{1em}
    \caption{Cross-architecture evaluation of different DP-DD methods on CIFAR-10 with $\epsilon=10$. Test accuracy (\%) is reported for ConvNet-3, ResNet-18, and DenseNet-121 on two IPC settings (10 and 50). Best results are in bold.}
    \label{tab:cross_architecture}
\end{table}

\section{Visualization of Distilled Dataset}
\label{appendix:visualization_of_distilled_dataset}
\begin{figure}[H]
    \centering
    \includegraphics[width=0.45\textwidth]{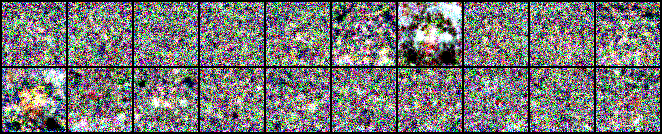}
    \caption{PSG, CelebA, $\IPC=10$.}
  \end{figure}
  \begin{figure}[H]
    \centering
    \includegraphics[width=0.45\textwidth]{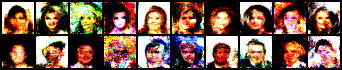}
    \caption{NDPDC, CelebA, $\IPC=10$.}
  \end{figure}
  \vspace{-5pt}
  \begin{figure}[H]
    \centering
    \includegraphics[width=0.45\textwidth]{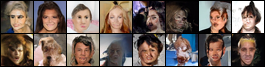}
    \caption{\lib, CelebA, $\IPC=10$.}
  \end{figure}
  \vspace{-5pt}
  \begin{figure}[H]
    \centering
    \begin{subfigure}{0.26\textwidth}
      \centering
      \includegraphics[width=\textwidth]{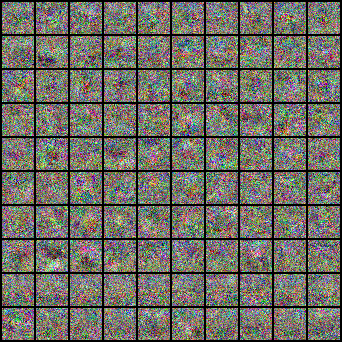}
      \caption{PSG}
      \label{fig:psg}
    \end{subfigure}
    \hfill
    \begin{subfigure}{0.26\textwidth}
      \centering
      \includegraphics[width=\textwidth]{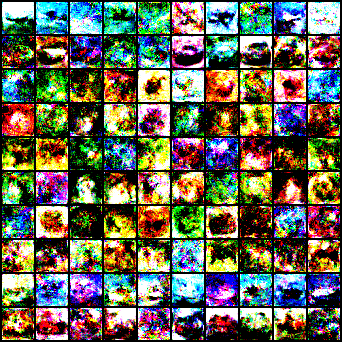}
      \caption{NDPDC}
      \label{fig:ndpdc}
    \end{subfigure}
    \hfill
    \begin{subfigure}{0.26\textwidth}
      \centering
      \includegraphics[width=\textwidth]{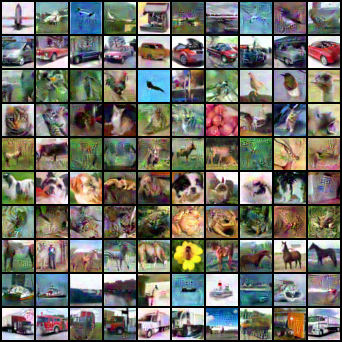}
      \caption{\lib (Ours)}
      \label{fig:ours}
    \end{subfigure}
    \caption{Different DP-DD methods on CIFAR-10 with $\IPC=50$.}
    \label{fig:distilled_datasets}
  \end{figure}

  \begin{figure}[H]
    \centering
    \begin{subfigure}{0.45\textwidth}
      \centering
      \includegraphics[width=\textwidth]{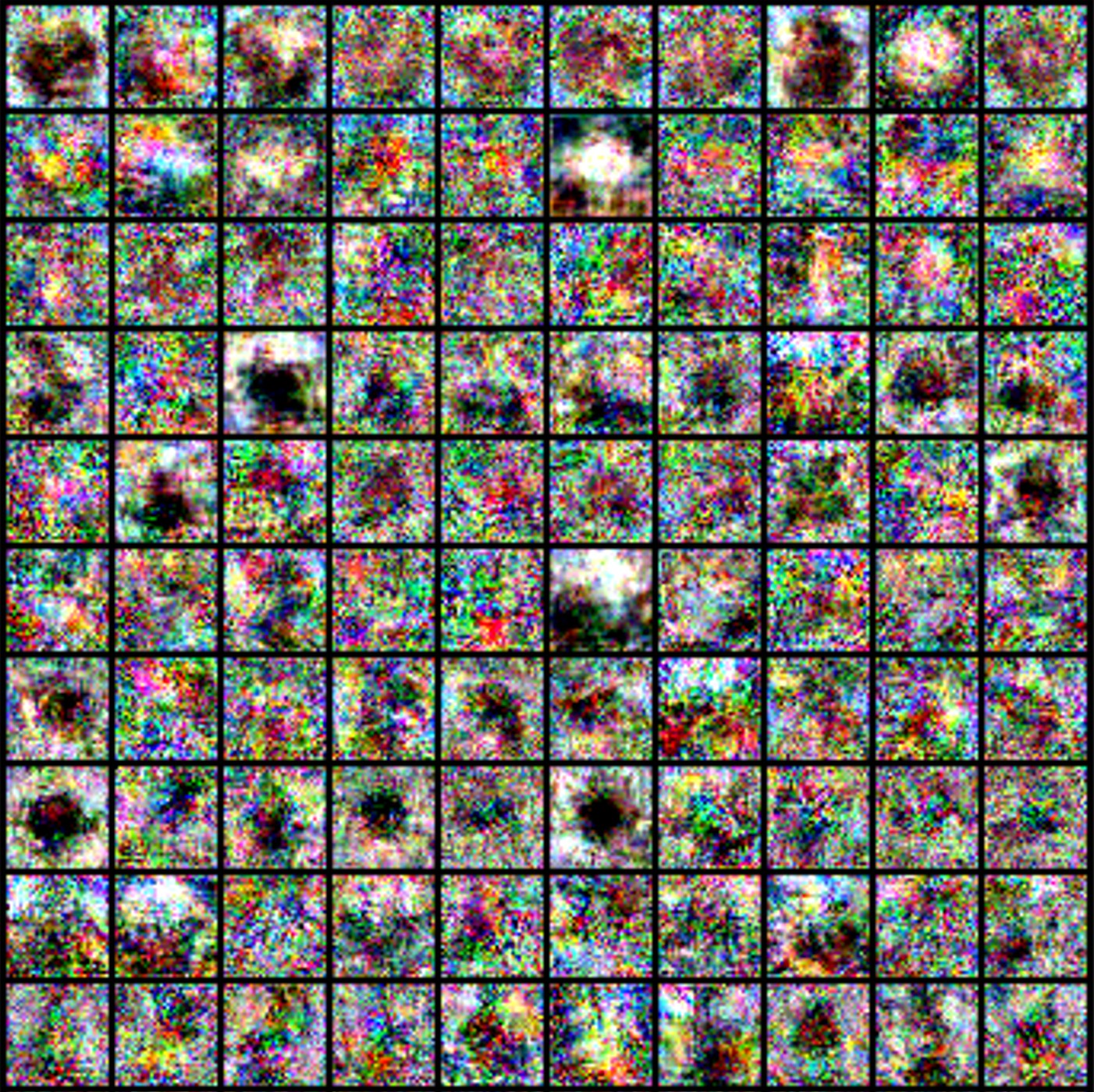}
      \caption{PSG}
      \label{fig:psg_cifar100}
    \end{subfigure}
    \hfill
    \begin{subfigure}{0.45\textwidth}
      \centering
      \includegraphics[width=\textwidth]{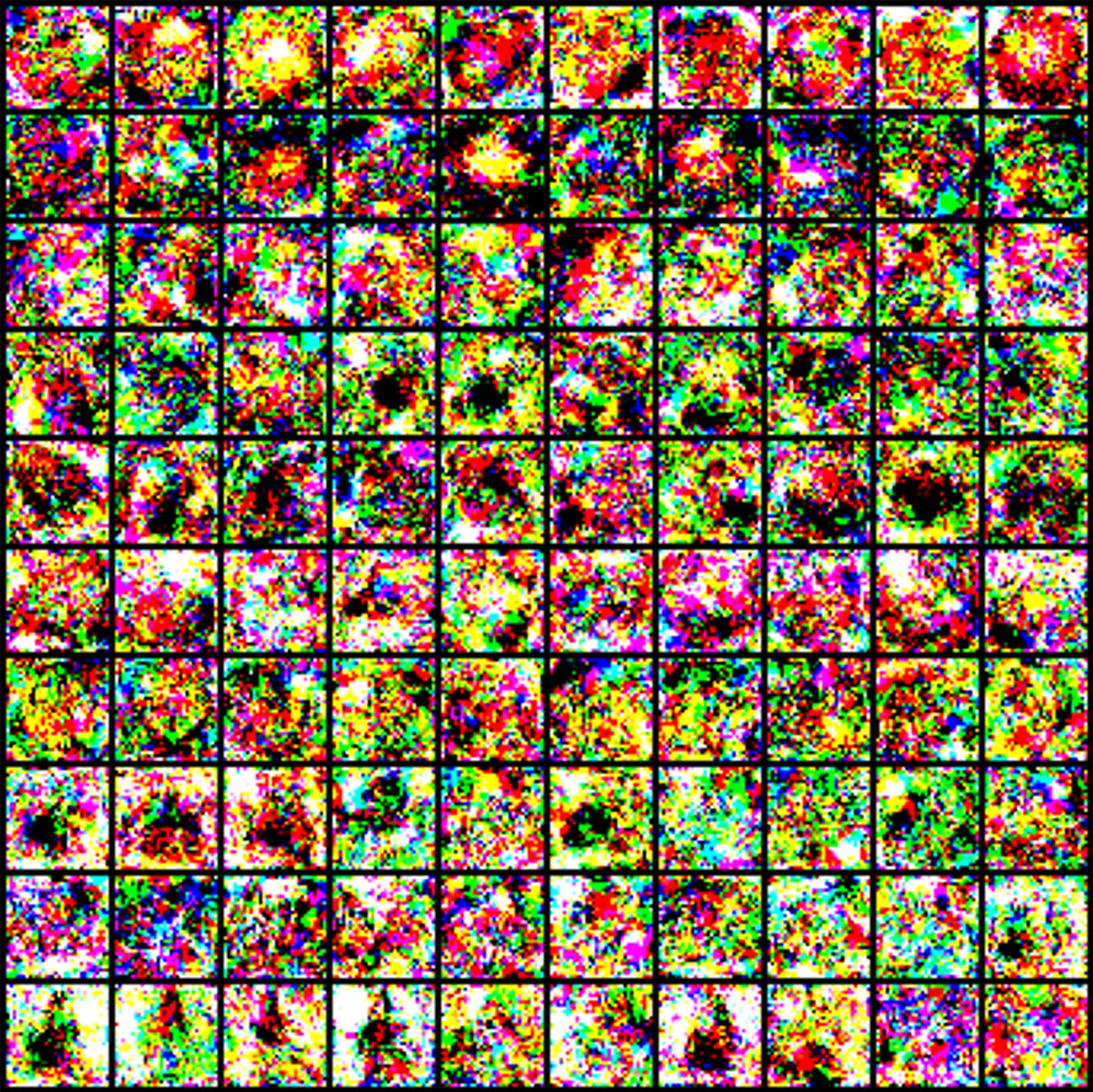}
      \caption{NDPDC}
      \label{fig:ndpdc_cifar100}
    \end{subfigure}
    \caption{The visualization of the distilled dataset generated by different methods on CIFAR-100.}
    \label{fig:cifar100_comparison}
  \end{figure}

  \section{Analysis of the Shift in Feature Representations of Distilled Examples During the Distillation Process}
\label{appendix:class_distribution_shift}
We found that during the feature matching distillation process, the distilled samples exhibit a progressive decrease in confidence for their own class labels. Consequently, this leads to a decline in their overall discriminability, making them less distinct from other classes.

  \begin{figure}[H]
    \centering
    \begin{subfigure}{0.45\textwidth}
      \centering
      \includegraphics[width=\textwidth]{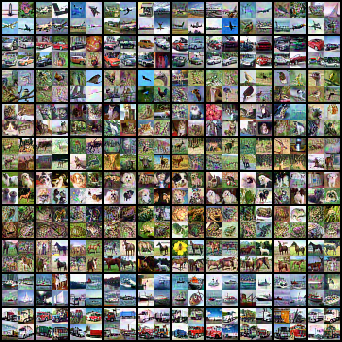}
      \caption{CIFAR-10}
      \label{fig:dpgeng_factor}
    \end{subfigure}
    \hfill
    \begin{subfigure}{0.45\textwidth}
      \centering
      \includegraphics[width=\textwidth]{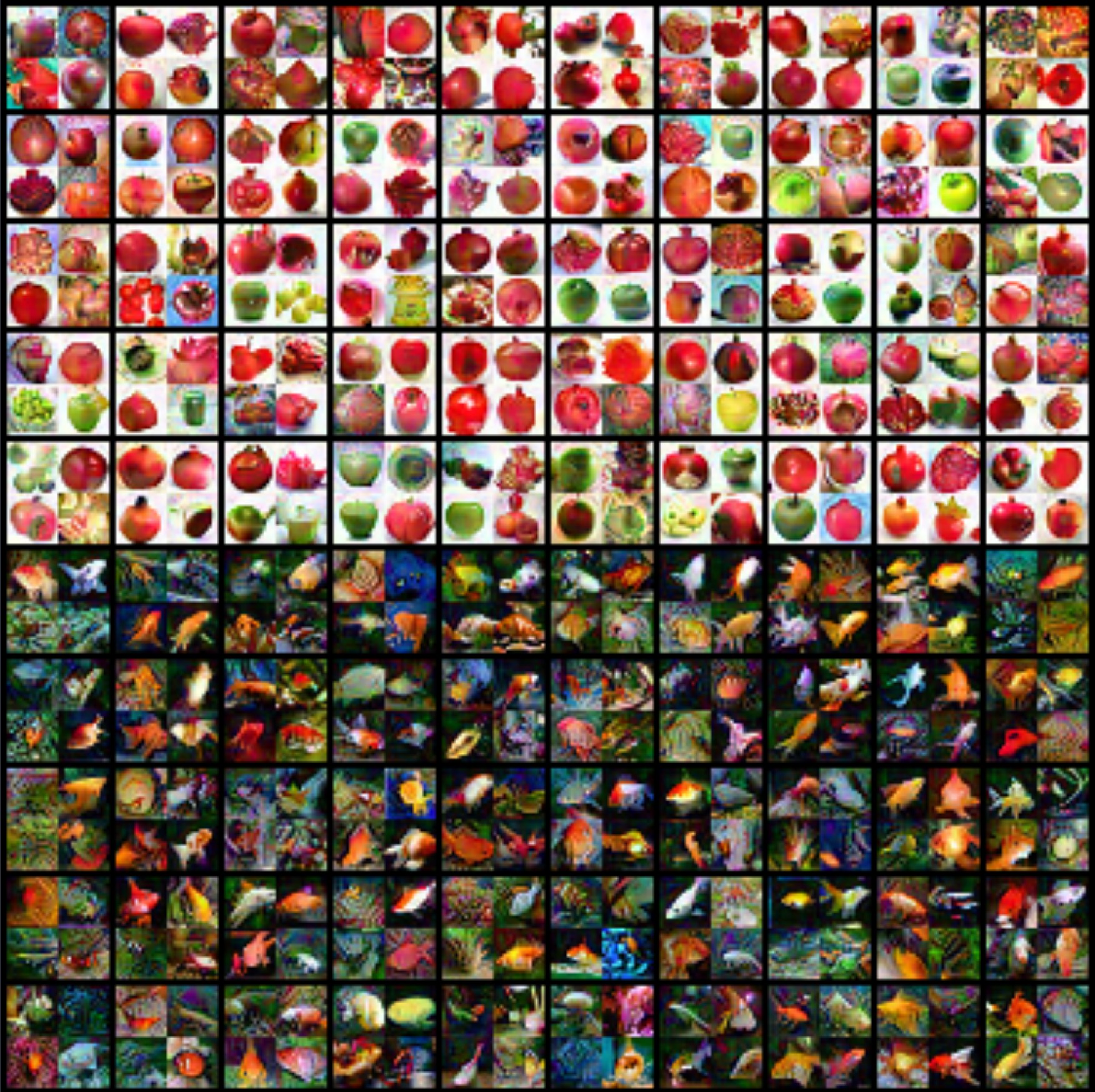}
      \caption{CIFAR-100}
      \label{fig:dpgeng_cifar100}
    \end{subfigure}
    \caption{The visualization of the distilled dataset generated by \lib with upsampling reparameterization.}
    \label{fig:dpgeng_comparison}
  \end{figure}

\begin{figure*}
    \centering
    \includegraphics[width=0.8\textwidth]{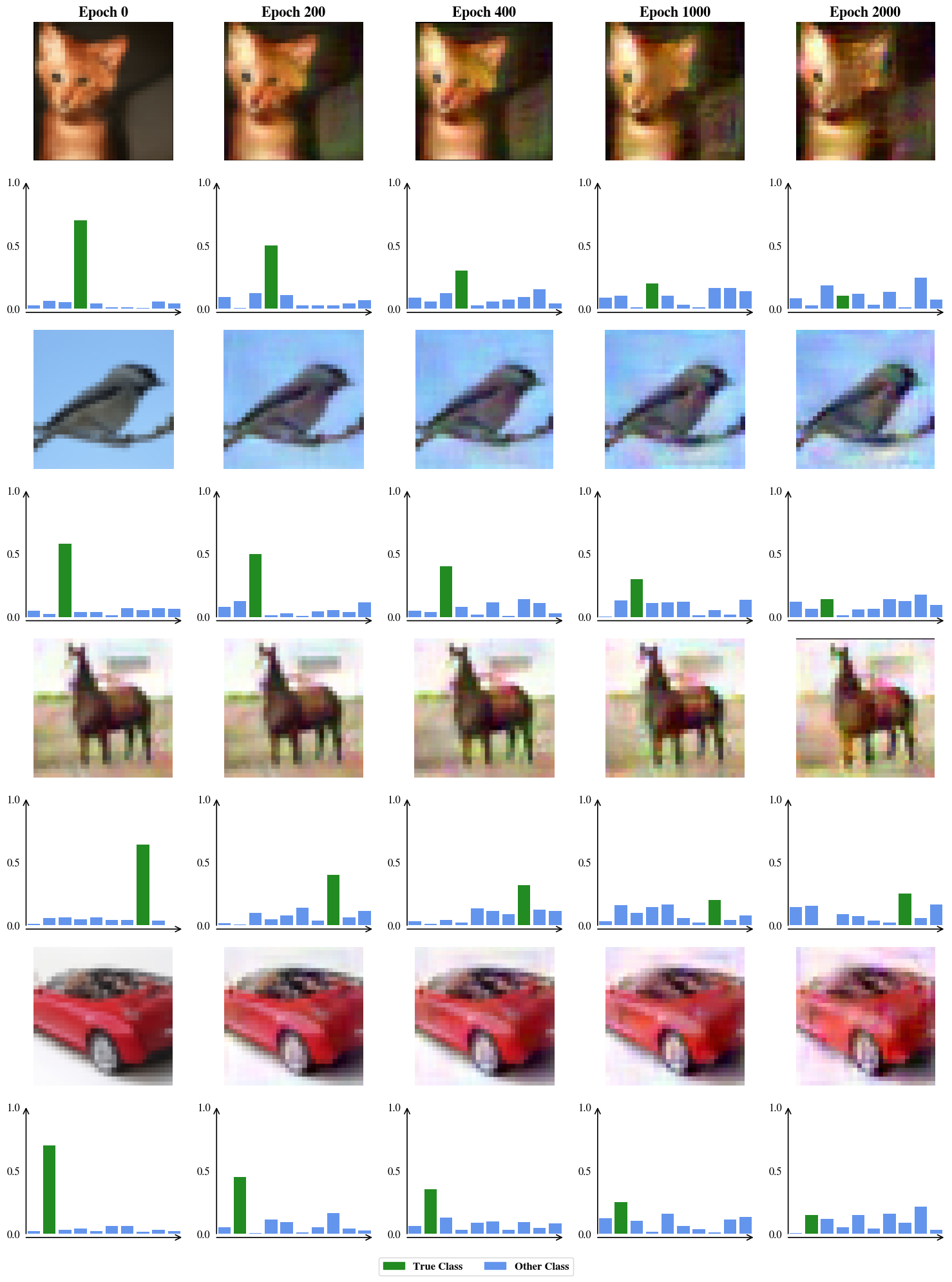}
    \captionof{figure}{The analysis of the shift in feature representations of distilled examples during the distillation process. This figure illustrates how the features of distilled examples deviate from their intended class due to the DP noise introduced during the distillation process. This observation motivates the design of our expert guidance mechanism.} 
    \label{fig:appendix_dist_shift}
\end{figure*}

\newpage

\end{document}